\definecolor{myurlcolor}{rgb}{0,0,0.7}
\definecolor{myrefcolor}{rgb}{0.8,0,0}
\newcommand{\ket}[1]{|#1\rangle}
\newcommand{\bra}[1]{\langle#1|}
\newcommand{\id}{\mathbbm{1}}
\newcommand{\B}{\mathcal  B}
\definecolor{nblue}{rgb}{0.2,0.2,0.7}
\definecolor{ngreen}{rgb}{0.2,0.6,0.2}
\definecolor{nred}{rgb}{0.8,0.2,0.2}
\definecolor{nblack}{rgb}{0,0,0}
\newtheorem{theorem}{Theorem}
\begin{document}

\title{A versatile construction of Bell inequalities for the multipartite scenario}

\author{Florian J. Curchod}
\email{Florian.Curchod@icfo.es}
\affiliation{ICFO-Institut de Ciencies Fotoniques, The Barcelona Institute of Science and Technology, 08860 Castelldefels (Barcelona), Spain}



\author{Mafalda L. Almeida}
\affiliation{ICFO-Institut de Ciencies Fotoniques, The Barcelona Institute of Science and Technology, 08860 Castelldefels (Barcelona), Spain}

\author{Antonio Ac\'{i}n}
\affiliation{ICFO-Institut de Ciencies Fotoniques, The Barcelona Institute of Science and Technology, 08860 Castelldefels (Barcelona), Spain}
\affiliation{ICREA--Instituci\'{o} Catalana de Recerca i Estudis Avan\c{c}ats, E--08010 Barcelona, Spain}



\date{\today}
\pacs{03.65.Ud, 03.67.Mn}

\begin{abstract}
Local measurements acting on entangled quantum states give rise to a rich correlation structure in the multipartite scenario. 
We introduce a versatile technique to build families of Bell inequalities witnessing different notions of multipartite nonlocality for any number of parties. The idea behind our method is simple: a known Bell inequality satisfying certain constraints, for example the Clauser-Horne-Shimony-Holt inequality, serves as the \emph{seed} to build new families of inequalities for more parties. The constructed inequalities have a clear operational meaning, capturing an essential feature of multipartite correlations: their violation implies that numerous subgroups of parties violate the inequality chosen as seed. 
The more multipartite nonlocal the correlations, the more subgroups can violate the seed. We illustrate our construction using different seeds and designing Bell inequalities to detect $m$-way nonlocal multipartite correlations, in particular, \emph{genuine multipartite nonlocal} correlations -- the strongest notion of multipartite nonlocality. For one of our inequalities we prove analytically that a large class of pure states that are genuine multipartite entangled exhibit genuine multipartite nonlocality for any number of parties, even for some states that are almost product. We also provide numerical evidence that this family is violated by all genuine multipartite entangled pure states of three and four qubits. Our results make us conjecture that this family of Bell inequalities can be used to prove the equivalence between genuine multipartite pure-state entanglement and nonlocality for any number of parties.



\end{abstract}

\maketitle

\section{Introduction}
Quantum theory is rich in features that defy classical intuition. Systems of several particles are particularly interesting in that sense, with quantum systems exhibiting more intricate correlations than those possible within classical ones. For instance, some composite quantum systems can not be specified by the state of their parts alone, but require a global description -- a phenomenon known as \textit{quantum entanglement}. When the parts of such entangled systems are separated at a distance, and local (independent) measurements are made on them, the distribution of outcomes can exhibit \textit{nonlocal correlations}, in the sense that they cannot be explained by the existence of a (possibly hidden) classical common cause~\cite{EPR,Bell}. Apart from their fundamental interest, quantum entanglement and quantum nonlocality have been identified as key resources for Quantum Information Science. 

Nonlocal correlations have been extensively studied in the simplest scenario of bipartite systems, which is sufficient to obtain powerful resources for information tasks with no classical equivalent: 
randomness expansion~\cite{colbeck2009quantum, ColbeckExpansion, Randomness} and amplification~\cite{ColbeckAmplification}, distribution of secret keys in a provably secure way~\cite{EkertQKD, barrett2005no, acin2007device} or testing the functioning of devices with minimal assumptions on their internal machinery~\cite{MayersSelfTesting}, for example. 

Multipartite scenarios -- consisting of set-ups with at least three parties --  have received far less attention due to their greater complexity. They offer, however, a much richer source of correlations than the bipartite set-up, and have already been proven useful for several tasks. 
Either for a better use of the potential provided by multipartite systems -- which might be particularly interesting for tasks on quantum networks -- or simply to explore scenarios that go beyond the standard bipartite set-up, the study of multipartite scenarios is nowadays a central problem~\cite{Svetlichny,Svetlichny2,Gallego:PRL:070401,Bancal:PRA:014102,BancalDIEW,gallego2013full,bancal2014quantum,BoudaRandomness,Jordi}.

A detailed study of correlations in the multipartite scenario is an increasingly demanding task, as the complexity of the possible states of the systems and sets of correlations grows exponentially with the number of parties. 
A common approach to characterise multipartite correlations consists of testing whether they can be reproduced by models in which the parties share different physical resources (classical, quantum or post-quantum correlations)~\cite{Svetlichny,BancalQuantifying,Bancal:PRA:014102,Gallego:PRL:070401}. These models range from completely classical -- where all parties can only share classical correlations, to \emph{genuine multipartite}  -- where all parties are required to be non-classically correlated. Intermediate models include, for instance, hybrid models where non-classical correlations are allowed inside groups of the parties, but the different groups can only be correlated classically between each other 
~\cite{Bancal:PRA:014102}. Although families of Bell inequalities that provide insight on the rich structure of multipartite scenario have been built \cite{Svetlichny,BancalQuantifying,Bancal:PRA:014102}, we are far from a complete characterisation of multipartite correlations.

Understanding the precise relation between entanglement and nonlocality in the multipartite scenario is of particular interest.  While quantum entanglement is necessary for the display of quantum nonlocality, it is not sufficient. Indeed, there exist entangled mixed states for which single local measurements never generate nonlocal correlations \cite{Werner,BarrettPOVM}. 
Remarkably, bipartite systems in a pure quantum state display a straighforward relation between entanglement and nonlocality: all pure entangled bipartite states are nonlocal, a result known as Gisin's Theorem ~\cite{GisinThrm}. This result has been extended to the multipartite scenario ~\cite{PopescuGeneric,PopescuErratum}, with the caveat that the used definitions of entanglement and nonlocality do not capture any truly multipartite features (with these definitions, a multipartite system is said to be non-classically correlated if at least two parties share non-classical correlations). Partial results have been obtained for the genuine multipartite (GM) notions of entanglement and nonlocality: all three-qubit systems in a GM entangled (GME) pure state are GM nonlocal (GMNL), as well as any $n$-qubit systems in a fully-symmetric GME pure state~\cite{Yu3,Chinese1}.  However, these results rely on the use of GM Hardy-type paradoxes ~\cite{Hardy}, which have the drawback of not allowing for experimental tests, contrary to nonlocal correlations detected by the violation of a Bell inequality.

%
%
In this work we introduce a new technique to build Bell inequalities for the detection of truly multipartite nonlocal correlations, in a no-signalling framework \cite{Bancal:PRA:014102}. These inequalities have a very clear operational meaning and capture essential features of multipartite nonlocality. Our construction takes a ``seed'' -- a Bell inequality that fulfils certain constraints -- to generate Bell inequalities for an arbitrary number of parties. The inequalities can be designed to detect $m$-way nonlocality, for any $2 \leq m \leq n$, including the extreme case of GMNL ($m=2$).  We illustrate the potential of the method by constructing several families of multipartite Bell inequalities from different seeds and for different notions of multipartite nonlocality.

Our technique is particularly fit for the detection of multipartite nonlocal correlations of pure states. Indeed, using the CHSH inequality~\cite{CHSH} as the seed, we design two families of Bell inequalities, $I_{\textrm{sym}}$ and $I_{\textrm{\ding{192}}}$ that detect GMNL in large classes of GME pure states. Note that the CHSH inequality has already been used to prove the equivalence between pure state entanglement and nonlocality for bipartite systems~\cite{GisinThrm}. Moreover, for three parties, $I_{\textrm{sym}}$ coincides with a Bell inequality obtained in \cite{Bancal:PRA:014102}, for which the authors found numerical evidence that the equivalence holds for all three-qubit states. Here we show analytically that, for any number of parties, all pure GHZ-like states that are GME contain GMNL correlations detected by $I_{\textrm{sym}}$, even almost separable states. We supplement these analytical results by providing numerical evidence that all four-qubit systems in a GME pure state violate $I_{\textrm{sym}}$. In the tripartite scenario, using $I_{\textrm{\ding{192}}}$, we also show analytically  that all pure states symmetrical under the permutation of two parties are GMNL.
The partial results obtained added to the operational meaning of our construction lead us to conjecture that  the family of Bell inequalities $I_{\textrm{sym}}$ can be used to generalise Gisin's theorem, proving that all GME pure states are GMNL.  \\

\section{The tripartite scenario}
We start by introducing the main concepts used in this work and our results for the tripartite scenario, which is the simplest multipartite scenario for the observation of nonlocal correlations. 
This scenario counts with three distant observers $A_i$, $i\in\{1,2,3\}$ making rounds of measurements on multipartite quantum systems. At each round, the choice of local measurement performed by each party is labelled $x_i$ and the obtained outcome $a_i$. 
The generated joint conditional probability distribution $P(a_1a_2a_3|x_1x_2x_3)$ is then said to be \textit{local} if it factorises, given the additional knowledge of a (possibly hidden) common classical cause $\lambda$:
\begin{equation}\begin{split}\label{FullyLocal}
P_{\mathcal{L}}(a_1a_2a_3|x_1x_2x_3) \\ =\sum\limits_{\lambda} q(\lambda) P_{A_1}(a_1|x_1,\lambda) P_{A_2}(a_2|x_2,\lambda)& P_{A_3}(a_3|x_3,\lambda)
\end{split}\end{equation}
The common cause $\lambda$  is a discrete random variable with distribution $q(\lambda) \geq 0$, $\sum\limits_{\lambda} q(\lambda) = 1$, and $P_{A_i}(a_i|x_i,\lambda)$  is a probability distribution for party $A_i$. A distribution $P(a_1a_2a_3|x_1x_2x_3)$ that does not allow for a decomposition \eqref{FullyLocal} is said to be \emph{nonlocal}. Note that this definition of locality for three parties is a straightforward generalisation of the bipartite scenario, where the only difference is the addition of a third party. 
Because of the measurement arrangements it is assumed that the \emph{no-signalling (NS)} principle~\cite{GhirardiNS} holds, i.e. party $A_1$ cannot signal to the other parties by performing a choice of measurement
\begin{equation}\begin{split}\label{NScond}
P(a_2a_3|x_2x_3) \equiv P(a_2a_3|x_1x_2x_3)\\=\sum_{a_1}P(a_1a_2a_3|x_1x_2x_3), \hspace{0.2cm} \forall x_1
\end{split}\end{equation}
and similarly for parties $A_2$ and $A_3$.

The notion of separability for a tripartite pure state $\ket{\psi_{123}}$ is also a direct extension of the bipartite case, $\ket{\psi_{123}}=\ket{\phi_1}\ket{\phi_2}\ket{\phi_3}$, where $\ket{\phi_{i}}$ is the state of party $A_i$. The state $\ket{\psi_{123}}$ is then entangled whenever it does not admit for the previous decomposition. In that case, it is already known to be nonlocal, as there always exist local measurements on it that lead to a nonlocal joint distribution \cite{PopescuGeneric}. This equivalence between pure state entanglement and nonlocality is however essentially the same as for bipartite systems \cite{GisinThrm}, since it only requires two parties to be entangled.\\

Here we are interested in genuinely multipartite definitions of entanglement and nonlocality. As first noticed by Svetlichny~\cite{Svetlichny}, distributions generated in a tripartite scenario lead to stronger notions of nonlocality. Consider for instance a relaxation of the locality assumption, where pairs of parties are now allowed to group together and share nonlocal resources. This type of hybrid local/nonlocal models leads to joint conditional probability distributions 
\begin{equation}\begin{split}\label{GenuineNonlocal}
P_{2/1}(a_1a_2a_3|x_1x_2x_3)& = \\ = \sum\limits_{\lambda_1} q_1(\lambda_1) P_{A_1A_2}(a_1a_2|x_1x_2,\lambda_1) & P_{A_3}(a_3|x_3,\lambda_1)\\ + \sum\limits_{\lambda_2} q_2(\lambda_2) P_{A_1A_3}(a_1a_3|x_1x_3,\lambda_2) & P_{A_2}(a_2|x_2,\lambda_2)\\+ \sum\limits_{\lambda_3} q_3(\lambda_3)  P_{A_2A_3}(a_2a_3|x_2x_3,\lambda_3) & P_{A_1}(a_1|x_1,\lambda_3)
\end{split}\end{equation}
with $q_i(\lambda_i) \geq 0$ and $\sum\limits_{i_,\lambda_i} q_i(\lambda_i)= 1$. Distributions $P(a_1a_2a_3|x_1x_2x_3)$ that cannot be decomposed in the form \eqref{GenuineNonlocal} are named \emph{genuine tripartite nonlocal}. As shown in \cite{Gallego:PRL:070401}, the original notion of multipartite nonlocality by Svetlichny \cite{Svetlichny} faces operational problems. To avoid these, one assumes that the no-signalling principle~\cite{NS} also holds at the level of distributions $P_{A_iA_j}(a_ia_j|x_ix_j,\lambda)$, which implies that the marginals $P(a_i|x_i,\lambda) = P(a_i|x_ix_j,\lambda) = \sum_{a_j} P(a_ia_j|x_ix_j,\lambda), \hspace{0.2cm} \forall x_j$, are well defined for all $\lambda$.



In analogy, a tripartite system is said to be in a genuine tripartite entangled pure state if it can not be decomposed as $\ket{\psi_{123}}=\ket{\phi_{ij}}\ket{\phi_k}$, where $ijk$ is any combination of the parties. One can easily verify that local measurements on biseparable states always lead to a hybrid joint distribution \eqref{GenuineNonlocal}.\\

Before introducing our inequalities witnessing genuine tripartite nonlocal correlations, recall that a Bell inequality is described by a bounded linear combination of probability terms
\begin{equation}\label{Bellineq}
\sum_{\vec{a},\vec{x}}c_{\vec{a},\vec{x}}P(\vec{a}|\vec{x})\leq \B
\end{equation}
in a experiment with $n$ parties, where the number of observables $\vec{x}=(x_1,x_2,\ldots,x_n)$ and respective outcomes $\vec{a}=(a_1,a_2,\ldots,a_n)$ is fixed. The coefficients $c_{\vec{a},\vec{x}}$ are real numbers and $\B$ is the maximum attained by local or hybrid distributions, according to the problem. 
In the bipartite scenario, where each party has two choices of two-outcome measurements, i.e. $x_i,a_i\in\{0,1\}$ for $i=1,2$, the violation of the CHSH inequality is both necessary and sufficient for $P(a_1a_2|x_1x_2)$ to be nonlocal~\cite{CHSH,FroissartSingleCHSH,FineSingleCHSH}. This is also the inequality used to show that all pure bipartite entangled states are nonlocal~\cite{GisinThrm}. Here we use a variant of the CHSH inequality, 
\begin{equation}\label{IA1A2}
I^{A_1A_2}= P(00|00)-P(01|01)-P(10|10)-P(00|11) \leq 0
\end{equation}
that, for no-signalling distributions, is equivalent to the standard expression
\begin{equation}\label{StandardCHSH}
\textrm{CHSH} = \langle A_0B_0 \rangle + \langle A_1B_0 \rangle + \langle A_0B_1 \rangle - \langle A_1B_1 \rangle \leq 2
\end{equation}
 where $\langle A_xB_y \rangle = \sum\limits_{ab} P(a=b|xy)-P(a\neq b|xy)$.

\section{Bell inequalities for genuine tripartite nonlocality}
We start by exemplifying our method through the construction of two Bell inequalities witnessing genuine tripartite nonlocal correlations. In both cases, we use the CHSH inequality \eqref{IA1A2} as the seed. The main idea is to make enough pairs of parties to play the nonlocal game defined by the seed, such the inequality can only be violated by genuine tripartite correlations. 
The first inequality is symmetrical under permutation of the three parties , 
\begin{equation}\begin{split}\label{IA1A2A3sym}
I^{A_1A_2A_3}_{\textrm{sym}} = I^{A_1A_2}_{0|0}+I^{A_1A_3}_{0|0}+I^{A_2A_3}_{0|0}-P(000|000) \leq 0
\end{split}\end{equation}
and the second inequality is symmetrical under the permutation of parties $A_2$ and $A_3$,
\begin{equation}\begin{split}\label{IA1A2A3}
I^{A_1A_2A_3}_{\textrm{\ding{192}}} = I^{A_1A_2}_{0|0}+I^{A_1A_3}_{0|0}-P(000|000) \leq 0
\end{split}\end{equation}
where the term 
\begin{equation}\begin{split}\label{I3Liftee}
I^{A_1A_2}_{0|0} \equiv P(000|000)-P(010|010)-P(100|100)\\ -P(000|110)
\end{split}\end{equation}
represents the \textit{lifting}~\cite{Pironio:Lifting} of the seed $I^{A_1A_2}$ \eqref{IA1A2} to the tripartite scenario, by setting observer $A_3$ to measurement $x_3=0$ and outcome $a_3=0$ (and similarly for the terms $I^{A_1A_3}_{0|0} $ and $I^{A_2A_3}_{0|0}$). Intuitively, in $I^{A_1A_2A_3}_{\textrm{sym}}$ every pair of parties plays a (lifted) CHSH game while in $I^{A_1A_2A_3}_{\textrm{\ding{192}}}$  party $A_1$ acquires a central role by playing a CHSH game with every remaining party. Note that the local bound of the lifted inequalities remains the same as the local bound for the seed,  $I^{A_iA_j}_{0|0}\leq0$. See Appendix \ref{APP1:liftings} for the proof of this property and more details on lifted Bell inequalities.

\begin{theorem}\label{3partyisGMNL}
The Bell inequalities $I^{A_1A_2A_3}_{\textrm{sym}}$ \eqref{IA1A2A3sym} and $I^{A_1A_2A_3}_{\textrm{\ding{192}}}$ \eqref{IA1A2A3} witness genuine tripartite nonlocality.
\end{theorem}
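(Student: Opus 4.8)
The plan is to show that both linear functionals are bounded above by $0$ on the set of biseparable (hybrid $2/1$) distributions \eqref{GenuineNonlocal}; a value strictly above this bound can then only be produced by a genuine tripartite nonlocal distribution, so the bound $\B=0$ certifies genuine tripartite nonlocality. Because a distribution of the form \eqref{GenuineNonlocal} is a convex combination of three bipartition terms and each Bell expression is linear in $P$, it suffices to bound the functional on a single term $P_{A_jA_k}(a_ja_k|x_jx_k,\lambda)\,P_{A_i}(a_i|x_i,\lambda)$, i.e.\ with one ``lonely'' party $A_i$ correlated only locally to the grouped pair $A_jA_k$. Decomposing the lonely party's response into deterministic strategies (again by linearity), I may assume $A_i$ outputs a fixed function $a_i=f(x_i)$ while $A_jA_k$ follow an arbitrary no-signalling box.

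First I would use symmetry to cut down the bipartitions. Since $I^{A_1A_2A_3}_{\textrm{sym}}$ \eqref{IA1A2A3sym} is invariant under every permutation of the parties -- the set of lifted seeds $\{I^{A_1A_2}_{0|0},I^{A_1A_3}_{0|0},I^{A_2A_3}_{0|0}\}$ and the term $P(000|000)$ are permutation-invariant, using that the seed \eqref{IA1A2} is symmetric under $A_1\leftrightarrow A_2$ -- a single bipartition, say $A_3$ lonely, settles the symmetric inequality. For $I^{A_1A_2A_3}_{\textrm{\ding{192}}}$ \eqref{IA1A2A3}, which is symmetric only under $A_2\leftrightarrow A_3$, I would treat two inequivalent cases: $A_1$ lonely, and $A_2$ (equivalently $A_3$) lonely.

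Next, for each configuration I would substitute $a_i=f(x_i)$ into the lifted seeds \eqref{I3Liftee}, collapsing every probability term into a marginal of the grouped box times an indicator on $f$, and then split into cases according to $f(0),f(1)$. In the benign cases the lonely party fails to activate the positive term $P(000|000)$ (for instance $f(0)=1$, or $f(1)=1$), and the functional reduces to a sum of negated probability terms, which is manifestly $\leq 0$. The only delicate case is when $f$ outputs, on both relevant inputs, the value that activates $P(000|000)$: then the lifted seed played by the grouped pair survives in full and can be positive, while the net coefficient of $P(000|000)$ also becomes positive.

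This last case is where the real content sits, and it is exactly what the subtracted $-P(000|000)$ term together with the no-signalling assumption on the grouped pair are designed for. Here the expression takes the schematic form $c\,P_{A_jA_k}(00|00)$ with $c>0$, minus CHSH-type negative terms, and I would tame the positive contribution with the no-signalling marginal identities for the pair, e.g.\ $P_{A_jA_k}(00|00)\leq P_{A_jA_k}(00|01)+P_{A_jA_k}(01|01)$ and $P_{A_jA_k}(00|00)\leq P_{A_jA_k}(00|10)+P_{A_jA_k}(10|10)$, which hold because the omitted terms are nonnegative. Substituting these bounds cancels the positive $P(000|000)$ contribution against the negative CHSH terms and leaves a nonpositive remainder. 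Verifying that this cancellation goes through -- i.e.\ that the no-signalling slack exactly covers the positive coefficient in every activated sub-case -- is the main obstacle; the remaining sub-cases and the other bipartitions are routine sign bookkeeping.
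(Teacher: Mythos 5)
Your proposal is correct and shares the paper's skeleton (convexity reduction to extremal hybrid terms $P_{A_iA_j}P_{A_k}$, showing the lifted seeds across the cut cannot be positive, and using the subtracted $P(000|000)$ to neutralize the one in-group seed), but you execute the last step by a heavier route than necessary, and the step you flag as ``the main obstacle'' is in fact where the paper's construction is designed to be trivial. The point you miss is that
\begin{equation*}
I^{A_iA_j}_{0|0}-P(000|000)=-P(010|010)-P(100|100)-P(000|110)\leq 0
\end{equation*}
\emph{identically}, for any probability distribution whatsoever: the single positive term of the lifted seed is cancelled exactly by the subtracted $P(000|000)$, leaving only negated probabilities. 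So once you know that at most one of the three (resp.\ two) lifted seeds can be positive -- namely the one played inside the nonlocal group -- no further estimate is needed; no-signalling plays no role in this cancellation. Your alternative, expanding the lonely party into deterministic strategies and taming $c\,P_{A_jA_k}(00|00)$ with the marginal inequalities $P(00|00)\leq P(00|01)+P(01|01)$ and $P(00|00)\leq P(00|10)+P(10|10)$, does go through (I checked the fully activated sub-case of $I_{\textrm{sym}}$: the remainder is $-P(01|01)-P(10|10)-P(00|11)\leq 0$), and it has the mild virtue of making explicit where no-signalling of the grouped pair's box enters. But in the paper that role is isolated in a separate lemma (Appendix~A): a product distribution across the cut factorizes the lifted CHSH into single-party marginals, so it respects the seed's local bound $0$. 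Separating these two ingredients -- ``cross-cut seeds obey the local bound'' and ``$\bar I^{A_iA_j}_{0|0}\leq 0$ for free'' -- is what lets the same argument scale immediately to $n$ parties, arbitrary seeds of the form \eqref{GG:UsefulIneqs}, and $m$-way locality, whereas your case-by-case bookkeeping would have to be redone for each generalization.
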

\begin{proof} We want to show that any hybrid distribution satisfies $I^{A_1A_2A_3}_{\textrm{sym}} \leq 0$ and $I^{A_1A_2A_3}_{\textrm{\ding{192}}} \leq 0$. First, observe that due to the convexity of hybrid distributions \eqref{GenuineNonlocal}, it is sufficient to perform the proof for the extremal distributions $P_{A_iA_j}(a_ia_j|x_ix_j)P_{A_k}(a_k|x_k)$.

The second basic element of our proof is that
\begin{equation}\label{GGG:genralLiftings0}
I^{A_jA_k}_{0|0} \big(  P_{A_iA_j}(a_ia_j|x_ix_j)P_{A_k}(a_k|x_k)  \big) \leq 0
\end{equation}
for any triplet $i,j,k \in \{1,2,3\}$ with $i \neq j \neq k \neq i$. This comes from the fact that the (lifted) inequality $I^{A_jA_k}_{0|0}$ can only be violated if parties $A_j$ and $A_k$ are non-classically correlated, which is not the case when the correlations allow for a decomposition of the form $P_{A_jA_k}(a_ja_k|x_jx_k)P_{A_i}(a_i|x_i)$. (A proof of this property can be found in Appendix \ref{APP1:liftings}.)

After this observation, we know that for every extremal hybrid distribution, the only potentially positive term in both our inequalities is 
$I^{A_iA_j}_{0|0} \big(  P_{A_iA_j}(a_ia_j|x_ix_j)P_{A_k}(a_k|x_k)\big)$, therefore:
\begin{equation}\begin{split}
I^{A_1A_2A_3}_{\textrm{sym}} \leq I^{A_iA_j}_{0|0} - P(000|000)\\
I^{A_1A_2A_3}_{\textrm{\ding{192}}} \leq I^{A_iA_j}_{0|0} - P(000|000)
\end{split}\,.\end{equation}
The last element of our proof is then the fact that
\begin{equation}\begin{split}\label{IbarA1A2A3}
\bar{I}^{A_iA_j}_{0|0} \equiv & I^{A_iA_j}_{0|0} - P(000|000)=\\ = -P(010|010)-&P(100|100)-P(000|110) \leq 0
\end{split}\end{equation}
for any probability distribution, which will cancel the potentially positive term. \end{proof}

Notice that the idea behind our construction can be extended to build a three-parameter $\mu_{12},\mu_{13},\mu_{23}$ family of inequalities that also witness genuine tripartite nonlocality
\begin{equation}\begin{split}\label{IA1A2A3FULL}
I^{A_1A_2A_3}_{\mu} = \mu_{12} I^{A_1A_2}_{0|0}+\mu_{13} I^{A_1A_3}_{0|0}\\+\mu_{23} I^{A_2A_3}_{0|0}-P(000|000) \leq 0
\end{split}\end{equation}
for $\mu_{ij} \in [0,1]$ and $\mu_{12}+\mu_{13}+\mu_{23} > 1$ (if this last condition is not met, the inequality is trivial: $I^{A_1A_2A_3}_{\mu} \leq 0$ for any probability distribution). Indeed, following the arguments of the proof of Theorem \ref{3partyisGMNL}, one can verify that \eqref{IA1A2A3FULL} holds for any hybrid distribution \eqref{GenuineNonlocal}:
\begin{equation}\begin{split}\label{IA1A2A3FULLifull}
I^{A_1A_2A_3}_{\mu}\big(  P_{A_iA_j}(a_ia_j|x_ix_j)P_{A_k}(a_k|x_k)  \big)\leq &\\ \mu_{ij}I^{A_iA_j}_{0|0}-P(000|000) \leq \bar{I}^{A_iA_j}_{0|0} \leq& 0
\end{split}\end{equation}

\vspace{0.3cm}
It is interesting to observe that the local strategy where every party always obtains outcome $a_i=1$ for any measurement $x_i$ saturates both inequalities $I^{A_1A_2A_3}_{\textrm{\ding{192}}} = 0$ and $I^{A_1A_2A_3}_{\textrm{sym}} = 0$, thus as well $I^{A_1A_2A_3}_{\mu} = 0$. This implies that the local and hybrid bounds of our inequalities coincide. It also implies that our inequalities are tangent, both to the set of local correlations \eqref{FullyLocal} and hybrid ones \eqref{GenuineNonlocal}.\\

\section{Detection of genuine tripartite nonlocality in pure states}
Bell inequality $I^{A_1A_2A_3}_{\textrm{sym}}$ seems particularly fit for the detection of genuine tripartite nonlocality of pure states.  Indeed, it belongs to class 6 of \cite{Bancal:PRA:014102}, where strong numerical evidence was provided indicating that all three-qubit systems in a GME pure state could generate correlations violating it. This result hints that $I^{A_1A_2A_3}_{\textrm{sym}}$ is a good candidate for an analytical proof of equivalence between GME and GMNL for tripartite pure states. Later, we will generalise this inequality for $n$ parties, prove analytically that it detects GMNL in a large class of GME pure states and provide numerical evidence that all GME  pure states of four qubits are GMNL. 

We now focus on inequality $I^{A_1A_2A_3}_{\textrm{\ding{192}}} \leq 0$ and show that it is useful for the detection of genuine tripartite nonlocality of pure states. In \cite{Acin2000}, it was shown that all systems of three qubits in a pure state could be written as
\begin{equation}\begin{split}\label{3qubitsMAIN}
\ket{\Psi_3} = h_0 \ket{000} + h_1 e^{i \phi} \ket{100} + h_2 \ket{101} + h_3 \ket{110} + h_4 \ket{111}
\end{split}\end{equation}
where $h_i \in \mathbb{R}_{+}$, $\sum\limits_i h_i^2 = 1$ and $\phi \in [0,\pi]$.

\begin{theorem}\label{GG:THRM2} For all tripartite pure states \eqref{3qubitsMAIN} that are GME and  symmetrical under the permutation of any two parties, say $A_2$ and $A_3$ ($h_0,h_4>0$ and $h_2=h_3$), one can find local measurements on them such that the generated correlations violate inequality $I^{A_1A_2A_3}_{\textrm{\ding{192}}} \leq 0$ \eqref{IA1A2A3}, hence generating GMNL correlations.
\end{theorem}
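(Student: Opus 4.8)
The plan is to exhibit, for every such state, an explicit set of local projective qubit measurements whose statistics violate $I^{A_1A_2A_3}_{\textrm{\ding{192}}}\leq 0$ \eqref{IA1A2A3}. Since the state is invariant under $A_2\leftrightarrow A_3$, I would first impose identical measurement operators on $A_2$ and $A_3$; call their setting-$0$ outcome-$0$ projector $\proj{n_0}$, with $\langle n_0|0\rangle=c_0$, $\langle n_0|1\rangle=c_1$ and $|c_0|^2+|c_1|^2=1$. The decisive structural simplification is that each lifted term \eqref{I3Liftee} is a CHSH expression evaluated on a \emph{conditioned} two-qubit state: contracting the spectator qubit against $\ket{n_0}$ sends $\ket{\Psi_3}$ \eqref{3qubitsMAIN} to the unnormalised two-qubit vector $\ket{\chi}\propto c_0 h_0\ket{00}+(c_0 h_1 e^{i\phi}+c_1 h_2)\ket{10}+(c_0 h_2+c_1 h_4)\ket{11}$ on $A_1A_2$. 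By the $A_2\leftrightarrow A_3$ symmetry one then has $I^{A_1A_2}_{0|0}=I^{A_1A_3}_{0|0}=\bra{\chi}\hat{C}\ket{\chi}$, where $\hat{C}$ is the CHSH operator underlying \eqref{I3Liftee}. Hence $I^{A_1A_2A_3}_{\textrm{\ding{192}}}=2\bra{\chi}\hat{C}\ket{\chi}-P(000|000)$, and the entire problem collapses to the single scalar condition $P(000|000)>2\,(P(010|010)+P(100|100)+P(000|110))$, i.e. to a \emph{tilted} CHSH violation on the effective state $\chi$.

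Next I would show that genuine multipartite entanglement is exactly what makes $\chi$ entangled. Arranging its coefficients as a $2\times 2$ matrix, the determinant equals $c_0 h_0\,(c_0 h_2+c_1 h_4)$, which is nonzero for generic $c_0,c_1$ precisely because $h_0>0$ and $h_4>0$; in particular $\chi$ is entangled even in the boundary case $h_2=h_3=0$, provided the conditioning direction $\ket{n_0}$ is not the computational basis. Once $\chi$ is entangled, the existence of measurements yielding a CHSH violation is guaranteed in the spirit of Gisin's theorem \cite{GisinThrm}, and the phase $e^{i\phi}$ can be absorbed by letting $A_1$'s observables lie in the appropriate great circle of the Bloch sphere. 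What remains is to upgrade ``some CHSH violation'' to the stronger tilted condition and to do so uniformly in $h_i,\phi$.

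Concretely, I would take setting $0$ of each party to be a tilt of the computational basis and setting $1$ near the equator, parametrise the four Bloch directions by angles, and expand $P(000|000)-2\,(P(010|010)+P(100|100)+P(000|110))$ as an explicit trigonometric function of these angles and of $h_i,\phi$. Because $A_1$'s observables enter the off-diagonal coherences linearly, a couple of angles can be optimised in closed form, reducing the claim to the positivity of a low-degree expression that I would certify to be strictly positive whenever $h_0,h_4>0$.

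I expect the genuine difficulty to be uniformity in the almost-product regime, where the $-P(000|000)$ penalty is most damaging. When the global state approaches $\ket{000}$ (say $h_0\to 1$ with $h_2,h_4\to 0$), both the effective entanglement of $\chi$ and $P(000|000)$ vanish, and any fixed measurement choice gives a value that is negative or higher-order small. The resolution I would pursue is to let the conditioning direction $\ket{n_0}$ approach $\ket{1}$ with $|c_0|$ scaling like $h_4$, so that the concurrence of $\chi$ stays of order one, while simultaneously scaling the setting-$1$ angles so that the leading contribution to the tilted expression dominates the quadratically small penalty. Verifying that this balancing can be carried out simultaneously for every GME symmetric state — including the limits $h_2=h_3=0$ and $\phi=\pi$ — and that no intermediate regime (for instance large $h_1$) spoils positivity, is the crux of the argument; the remainder is a bounded, if tedious, trigonometric optimisation.
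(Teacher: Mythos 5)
Your reduction is the same as the paper's: impose identical measurements on $A_2$ and $A_3$, use the permutation symmetry to get $I^{A_1A_2A_3}_{\textrm{\ding{192}}}=2I^{A_1A_2}_{0|0}-P(000|000)$, and note that the conditioned state $\ket{\chi}$ is entangled iff $c_0h_0(c_0h_2+c_1h_4)\neq 0$, which the GME assumption guarantees for generic conditioning directions. Up to that point you match Appendix~\ref{APPGG:TheoremN3} exactly. But from there your argument has a genuine gap: you reduce the claim to the ``tilted'' condition $P(000|000)>2\bigl(P(010|010)+P(100|100)+P(000|110)\bigr)$ and then propose to certify it by an explicit trigonometric optimisation with a delicate balancing of scalings in the almost-product regime --- and you yourself flag that verifying this balancing uniformly over all $h_i,\phi$ (including $h_2=0$ and large $h_1$) is ``the crux'' and is not done. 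As written, the hardest part of the theorem is left unproved.

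The idea you are missing is that one should not fight the $-P(000|000)$ penalty at all but make it irrelevant via a Hardy paradox. For any \emph{non-maximally} entangled pure two-qubit state there is a one-parameter family of measurements (Appendix~\ref{GGAPP:HardyMeasurements}) achieving $P(010|010)=P(100|100)=P(000|110)=0$ exactly together with $P(000|000)>0$; then $2I^{A_1A_2}_{0|0}-P(000|000)=2P(000|000)-P(000|000)=P(000|000)>0$ identically, no matter how small $P(000|000)$ becomes as the state approaches a product state. This removes the uniformity problem entirely: there is no competition between a small positive term and a small penalty, because the negatively weighted probabilities vanish identically rather than being merely dominated. The one free parameter in Hardy's construction is also exactly what is needed to satisfy the consistency constraint you leave implicit, namely that $A_2$'s setting-$0$ measurement must coincide with the conditioning direction $\ket{n_0}$ (one tunes $\ket{n_0}$ away from the finitely many directions for which $\ket{\chi}$ is product or maximally entangled, and the remaining three measurements are then determined). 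Without this ingredient your outline is a plausible research plan rather than a proof; with it, the ``bounded, if tedious, trigonometric optimisation'' you anticipate is unnecessary.
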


\begin{proof}
A complete proof of Theorem \ref{GG:THRM2} can be found in Appendix \ref{APPGG:TheoremN3}. The main line of it goes as follows. We will start by choosing parties $A_2$ and $A_3$ to perform the same (projective) measurements $\bra{m_{a_2|x_2}} = \bra{m_{a_3|x_3}} \hspace{0.2cm}, \forall a_2=a_3$ and $x_2=x_3$. This, together with the $A_2 \leftrightarrow A_3$ invariance of the state, implies that the observed correlations $P(a_1a_2a_3|x_1x_2x_3)$ are also symmetrical with respect to the permutation of $A_2$ and $A_3$. Consequently, for these correlations, we have $I^{A_1A_2}_{0|0}=I^{A_1A_3}_{0|0}$ \eqref{I3Liftee} and $I^{A_1A_2A_3}_{\textrm{\ding{192}}}$ can be simplified to 
\begin{equation}\begin{split}
I^{A_1A_2A_3}_{\textrm{\ding{192}}}  = 2 I^{A_1A_2}_{0|0}-P(000|000) =\\=P(000|000)-\\-2P(010|010)-2P(100|100)-2P(000|110)\leq 0
\end{split}\end{equation}
We now show that we can always find appropriate measurements such that we obtain a particular violation of the previous inequality
\begin{equation}\begin{split}\label{HARDYMEASMAIN}
P(000|000) & > 0\\
P(010|010) &= P(100|100) = P(000|110) = 0\,.
\end{split}\end{equation}
These conditions correspond to an Hardy paradox \cite{Hardy} on parties $A_1$ and $A_2$. 
Consider the post-measurement state $\ket{\psi_{0|0}^{A_1A_2}}$, which is the state prepared by party $A_3$ after making the measurement $\bra{m_{a_3=0|x_3=0}}$ on $\ket{\Psi_3}$:
\begin{equation}
\ket{\psi_{0|0}^{A_1A_2}} \propto \id_{A_1} \otimes \id_{A_2} \otimes \bra{m_{a_3=0|x_3=0}} \ket{\Psi_3}\,.
\end{equation}
Since $\ket{\Psi_3}$ is GME by assumption, we can tune the measurement $\bra{m_{a_3=0|x_3=0}}$ such that the  prepared state $\ket{\psi_{0|0}^{A_1A_2}}$ is non-maximally entangled \cite{PopescuGeneric}. After Hardy's construction \cite{Hardy}, we know that for a pure non-maximally entangled state $\ket{\psi_{0|0}^{A_1A_2}}$ we can always find a one-parameter family of measurements on $\bra{m_{a_1|x_1}}$ and $\bra{m_{a_2|x_2}}$ leading to an Hardy paradox \eqref{HARDYMEASMAIN}. This means that we can choose freely the first measurement, say $\bra{m_{a_2=0|x_2=0}}$, and always find three other measurements $\bra{m_{a_2=0|x_2=1}},\bra{m_{a_1=0|x_1=0}},\bra{m_{a_1=0|x_1=1}}$ such that \eqref{HARDYMEASMAIN} is satisfied. 
Therefore, we are able to choose $\bra{m_{a_2=0|x_2=0}}=\bra{m_{a_3=0|x_3=0}}$ in order to be compatible with the condition of preparing a state $\ket{\psi_{0|0}^{A_1A_3}}$ $(=\ket{\psi_{0|0}^{A_1A_2}})$ that is non-maximally entangled.
More details can be found in Appendices \ref{APPGG:TheoremN3} and  \ref{GGAPP:HardyMeasurements} .
%
%
%
%
%

\end{proof}

\section{The general multipartite scenario} 
We proceed now to exposing our results in the general multipartite scenario. We consider any number $n>2$ of observers making dichotomic choices of local measurements $x_i \in \{0,1\}$ on their share of a joint quantum system and obtaining outcomes $a_i \in \{0,1\}$, generating a distribution $P(a_1a_2...a_n|x_1x_2...x_n) \equiv P(\vec{a}|\vec{x})$. 

The definition of genuine multipartite nonlocality for any number of parties is more intricate than the tripartite case, but follows basically the same idea as \eqref{GenuineNonlocal}. A distribution $P(\vec{a}|\vec{x})$ is said to be \textit{biseparable} if
\begin{equation}\begin{split}\label{biseparable}
P_{\textrm{2-sep}}(\vec{a}|\vec{x}) = \sum\limits_{g}\sum\limits_{\lambda_g} q_g(\lambda_g) P(\vec{a}_g|\vec{x}_g,\lambda_g)P(\vec{a}_{\bar{g}}|\vec{x}_{\bar{g}},\lambda_{g})
\end{split}\end{equation}
where $\sum\limits_{g}\sum\limits_{\lambda_g} q_g(\lambda_g) = 1$, $q_g(\lambda_g) \geq 0$ and $g$ is a group consisting of a particular subset of the $n$ observers and $\bar{g}$ its complement. We label the string of measurement choices (resp. outcomes) of the observers belonging to the group $g$  as $\vec{x}_g$ ($\vec{a}_g$). For example, for $n=3$ there are only three possible inequivalent ways of making two groups: $(g_1=A_1A_2,\bar{g}_1=A_3$), $(g_2=A_1A_3, \bar{g}_2=A_2$) and $(g_3=A_2A_3, \bar{g}_3=A_1$), leading to a decomposition of the form \eqref{GenuineNonlocal}. Distributions that can not be written according to the decomposition \eqref{biseparable} are \emph{genuine multipartite nonlocal}. This corresponds to the strongest type of multipartite nonlocality, in the sense that all the parties of the system are engaged in a nonclassical correlation. We will see later that we can define intermediate types of multipartite nonlocal correlations, where one allows for more than two groups of parties.

Again, local measurements on pure biseparable states $\ket{\psi_{1\ldots n}}=\ket{\phi_g}\ket{\phi_{\bar{g}}}$ for some splitting $g/\bar{g}$ of the particles, always lead to biseparable joint distributions \eqref{biseparable}. Genuine multipartite entanglement is necessary to generate genuine multipartite nonlocal correlations.\\

\section{Bell inequalities for genuine multipartite nonlocality}
The generalisation of inequalities $I^{A_1A_2A_3}_{\textrm{sym}}$ \eqref{IA1A2A3sym} and $I^{A_1A_2A_3}_{\textrm{\ding{192}}}$ \eqref{IA1A2A3} to any number $n$ of parties gives two distinct families of Bell inequalities that can be written in a simple form:
\begin{equation}\begin{split}\label{IGeneralsym}
I^{A_1\ldots A_n}_{\textrm{sym}} =\sum\limits_{i=1}^{n-1}\sum\limits_{j>i}^{n} I^{A_iA_j}_{\vec{0}|\vec{0}} - {n-1\choose 2}  P(\vec{0}|\vec{0}) \leq 0
\end{split}\end{equation}
\begin{equation}\begin{split}\label{IGeneral}
I^{A_1\ldots A_n}_{\textrm{\ding{192}}} = \sum\limits_{j>1}^{n} I^{A_1A_j}_{\vec{0}|\vec{0}} - (n-2)  P(\vec{0}|\vec{0}) \leq 0
\end{split}\end{equation}
where ${n-1\choose 2}  = \frac{(n-1)(n-2)}{2}$ and we take the freedom of writing $\vec{0}\equiv (0,0,...,0)$, the size of the string should be obvious in the context. Similarly to \eqref{I3Liftee}, $I^{A_iA_j}_{\vec{0}|\vec{0}}$ is a lifting of inequality $I^{A_iA_j}$ \eqref{IA1A2} to $n$ observers by setting the remaining $n-2$ observers to have their measurement and outcome set to $0$:
\begin{equation}\begin{split}\label{InLiftee}
I^{A_iA_j}_{\vec{0}|\vec{0}} = P(0_i0_j\vec{0}|0_i0_j\vec{0})-P(1_i0_j\vec{0}|1_i0_j\vec{0})\\ -P(0_i1_j\vec{0}|0_i1_j\vec{0})-P(0_i0_j\vec{0}|1_i1_j\vec{0})\,.
\end{split}\end{equation}
The operational meaning of these inequalities is the following. In both cases, we use as ``seed'' the CHSH inequality \eqref {IA1A2}, which defines a nonlocal game between parties $A_i$ and $A_j$ represented by the lifted inequalities $I^{A_iA_j}_{\vec{0}|\vec{0}}$. For the symmetrical family \eqref{IGeneralsym} every pair of parties is required to play a CHSH game while for the inequalities ``centered" on $A_1$ \eqref{IGeneral}, party $A_1$ is required to play a CHSH game with every other party.



\begin{theorem}\label{NpartyisGMNL}
The Bell inequalities $I^{A_1\ldots A_n}_{\textrm{sym}}\leq 0$ \eqref{IGeneralsym} and $I^{A_1\ldots A_n}_{\textrm{\ding{192}}}\leq 0$ \eqref{IGeneral} are witnesses of genuine multipartite nonlocality for all $n\geq3$.
\end{theorem}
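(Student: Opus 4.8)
The plan is to lift the three-party argument of Theorem \ref{3partyisGMNL} essentially verbatim, replacing the tripartite bipartitions by arbitrary splits $g/\bar g$ of the $n$ parties. By convexity of the biseparable set \eqref{biseparable}, it suffices to check both inequalities on the extremal distributions, which factorize as $P(\vec a_g|\vec x_g)\,P(\vec a_{\bar g}|\vec x_{\bar g})$ across a single fixed cut $g/\bar g$. For such a product I would sort each lifted CHSH term $I^{A_iA_j}_{\vec 0|\vec 0}$ according to whether the pair $\{A_i,A_j\}$ sits inside one group or straddles the cut.

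Two routine facts, both direct $n$-party liftings of the tripartite lemmas, do the work. First, any straddling term is nonpositive: a pair split across $g/\bar g$ is only classically correlated in the product distribution, so by the lifting property (the generalization of \eqref{GGG:genralLiftings0}) its lifted CHSH value is $\leq 0$. Second, every within-group term obeys $I^{A_iA_j}_{\vec 0|\vec 0}\leq P(\vec 0|\vec 0)$, because the leading probability of \eqref{InLiftee} is exactly $P(\vec 0|\vec 0)$ while the remaining three are subtracted; this is the $n$-party version of \eqref{IbarA1A2A3}. Consequently the whole sum of lifted terms is bounded by $P(\vec 0|\vec 0)$ times the number of within-group pairs, and it only remains to count those pairs.

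The counting is where the specific coefficients $\binom{n-1}{2}$ and $(n-2)$ enter. For the symmetric family \eqref{IGeneralsym}, a cut with group sizes $k$ and $n-k$ has $\binom{k}{2}+\binom{n-k}{2}=k^2-nk+\binom{n}{2}$ within-group pairs; this is convex in $k$, hence maximized at the extreme cut $k=1$ (equivalently $k=n-1$), where it equals exactly $\binom{n-1}{2}$. Thus $\sum_{i<j}I^{A_iA_j}_{\vec 0|\vec 0}\leq\binom{n-1}{2}P(\vec 0|\vec 0)$, the subtracted term cancels it, and $I^{A_1\ldots A_n}_{\textrm{sym}}\leq 0$. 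For the centered family \eqref{IGeneral} only the pairs $\{A_1,A_j\}$ with $A_j$ in $A_1$'s group can be positive; their number is $|g|-1\leq n-2$, again matching the coefficient, so $I^{A_1\ldots A_n}_{\textrm{\ding{192}}}\leq 0$.

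I expect the combinatorial bookkeeping of the last paragraph to be the only real obstacle: one must verify that $\binom{n-1}{2}$ and $(n-2)$ are tight against the worst-case split, so that the bound is saturated precisely by the most unbalanced bipartition and no biseparable distribution can overshoot it. The nonpositivity of straddling terms and the single-pair bound are mechanical consequences of the lifting results already established for three parties.
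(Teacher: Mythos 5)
Your proposal is correct and follows essentially the same route as the paper: convexity reduction to product distributions across a cut, nonpositivity of straddling lifted terms, the bound $I^{A_iA_j}_{\vec 0|\vec 0}\leq P(\vec 0|\vec 0)$ for within-group pairs, and the count $\binom{k}{2}+\binom{n-k}{2}\leq\binom{n-1}{2}$ maximized at the most unbalanced bipartition (the paper phrases this via the identity $\binom{n-1}{2}=\binom{m}{2}+\binom{n-m}{2}+(m-1)(n-m-1)$ rather than convexity in $k$, but the content is identical). No gaps.
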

\begin{proof}
Here we only provide an outline, the full proof can be found in Appendix \ref{APP2:IGeneral}. 
The idea is similar to the one for three parties. We want to show that all biseparable distributions \eqref{biseparable} for $n$ parties satisfy $I^{A_1\ldots A_n}_{\textrm{sym}} \leq 0$ and $I^{A_1\ldots A_n}_{\textrm{\ding{192}}}\leq 0$. Again, by convexity, it is enough to verify it for extremal biseparable distributions $P(\vec{a}_g|\vec{x}_g)P(\vec{a}_{\bar g}|\vec{x}_{\bar g})$ from \eqref{biseparable}.

If parties $A_i$ and $A_j$ belong to different groups, they are only classically correlated and therefore $I^{A_iA_j}_{\vec{0}|\vec{0}}\leq0$. Then, the only terms that can give a positive contribution $I^{A_iA_j}_{\vec{0}|\vec{0}} > 0$ are terms where parties $A_i$ and $A_j$ belong to the same group. Now the trick is to kill these positive contributions by subtracting enough $P(\vec0|\vec0)$ terms since, similarly to $n=3$ \eqref{IbarA1A2A3},
\begin{equation}\label{Ibar}
\bar{I}^{A_iA_j}_{\vec{0}|\vec{0}}\equiv I^{A_iA_j}_{\vec{0}|\vec{0}} - P(\vec{0}|\vec{0}) \leq0 
\end{equation}
for any probability distributions.

\emph{Symmetric family} $I_{\textrm{sym}}$ --- In general, if the first group $g$ consists of $m$ parties and $\bar{g}$ of $n-m$ for some $1 \leq m \leq n-1$, a total number of ${m\choose 2}+{n-m\choose 2}$ inequalites $I^{A_iA_j}_{\vec{0}|\vec{0}}$ can in principle be positive. Since ${n-1\choose 2} > {m\choose 2}+{n-m\choose 2} \hspace{0.2cm}, \forall m \geq 2$, the largest number of pairs is obtained by putting $n-1$ parties in one group, which means ${n-1\choose 2}$ potentially positive terms $I^{A_iA_j}_{\vec{0}|\vec{0}}$. Then, 
\begin{equation}\begin{split}
I^{A_1\ldots A_n}_{\textrm{sym}}\big(P(\vec{a}_g|\vec{x}_g)&P(\vec{a}_{\bar g}|\vec{x}_{\bar g})\big)\leq \\ \sum\limits_{i=1}^{n-2}\sum\limits_{j>i}^{n-1} I^{A_iA_j}_{\vec{0}|\vec{0}} - {n-1\choose 2} &P(\vec{0}|\vec{0}) =\sum\limits_{i=1}^{n-2}\sum\limits_{j>i}^{n-1} {\bar I}^{A_iA_j}_{\vec{0}|\vec{0}} \leq 0
\end{split}\end{equation}
where we used the fact that $I^{A_1\ldots A_n}_{\textrm{sym}}$ is invariant under permutations of parties to consider the specific partition $g=\{1,..,n-1\}$ and $\bar g =\{n\}$. \\\

\emph{Centered family} $I_{\textrm{\ding{192}}}$ --- The proof follows the same idea as before. Using \eqref{Ibar}, any biseparable distribution \eqref{biseparable} with $m$ parties in the first group $g$ containing party $A_1$ and $n-m$ in the other group $\bar{g}$ gives
\begin{equation}\begin{split}\label{IbarNonSym}
I^{A_1\ldots A_n}_{\textrm{\ding{192}}}\big(P(\vec{a}_g|\vec{x}_g) P(\vec{a}_{\bar g}|\vec{x}_{\bar g})\big) \\ \leq \sum\limits_{j \in g} I^{A_1A_j}_{\vec{0}|\vec{0}} - (n-2)  P(\vec{0}|\vec{0}) \leq \sum\limits_{j \in g} &\bar{I}^{A_1A_j}_{\vec{0}|\vec{0}} \leq 0
\end{split}\end{equation}
since there are at most $n-2$ parties together with party $A_1$ in the first group $g$.

\end{proof}\vspace{0.5cm}


\begin{figure}[h!]\begin{center}
  \scalebox{0.7}{\includegraphics{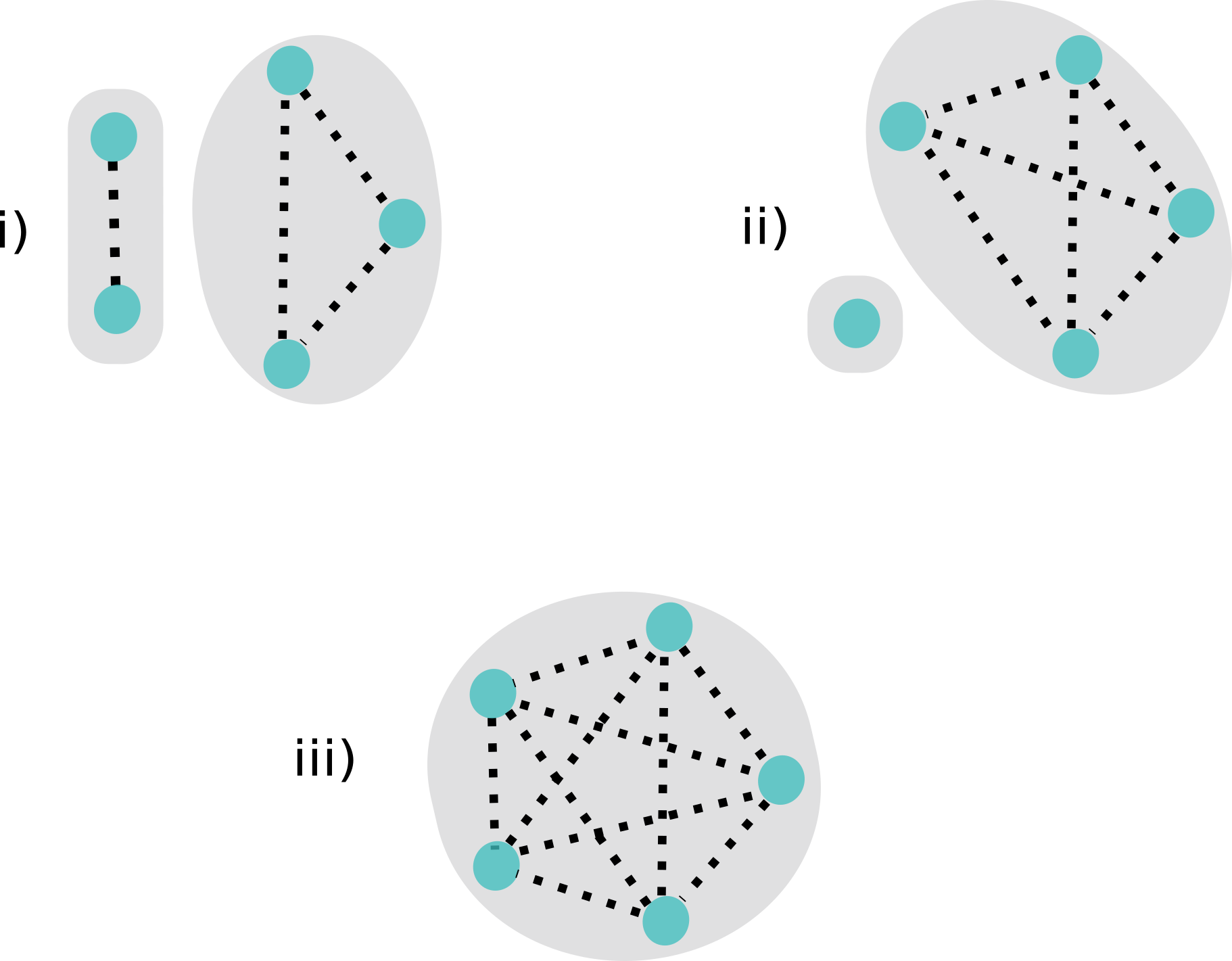}}
   \caption{\label{Fig:GG_n5} An abstract representation of five parties (blue circles) arranged into groups (gray areas). Two parties $i$ and $j$ inside the same group can potentially violate a lifted inequality $I^{A_iA_j}_{\vec{0}|\vec{0}}$ (as represented by a dashed line between them). $i)$ Two groups of parties $|g|=2;|\bar{g}|=3$, giving a distribution of the form $P(\vec{a}|\vec{x})=P(a_1a_2|x_1x_2)P(a_3a_4a_5|x_3x_4x_5)$ and a maximum number of ${2\choose 2}+{3\choose 2} = 4$ violated inequalities $I^{A_iA_j}_{\vec{0}|\vec{0}}>0$. $ii)$ Two groups of parties $|g|=1;|\bar{g}|=4$, for ${4 \choose 2} = 6$ potentially violated inequalities $I^{A_iA_j}_{\vec{0}|\vec{0}}>0$. $iii)$ GMNL: all parties are in the same group and thus ${5 \choose 2} = 10$ inequalities can be violated. Only $iii)$ can violate $I^{A_1\ldots A_5}_{\textrm{sym}} = \sum\limits_{i=1}^4 \sum\limits_{j>i}^5 I^{A_iA_j}_{\vec{0}|\vec{0}} - 6P(\vec{0}|\vec{0})$ since $I^{A_iA_j}_{\vec{0}|\vec{0}} - P(\vec{0}|\vec{0}) \leq 0$.}
  \end{center}\end{figure}

One can understand a violation of the families of inequalities \eqref{IGeneralsym} and \eqref{IGeneral} in the following way: $a)$ any (extremal) distribution that violates $I^{A_1\ldots A_n}_{\textrm{sym}}$ \eqref{IGeneralsym} needs to be capable of violating more than ${n-1\choose 2}$ lifted CHSH inequalities $I^{A_iA_j}_{\vec{0}|\vec{0}}$ between different observers $A_i$ and $A_j$; and $b)$ any distribution that violates $I^{A_1\ldots A_n}_{\textrm{\ding{192}}}$ \eqref{IGeneral} violates more than $n-2$ lifted CHSH inequalities $I^{A_1A_j}_{\vec{0}|\vec{0}}$ between $A_1$ and different observers $A_j$. Only GMNL correlations, where all pairs of parties are nonlocally correlated, are able to do this.


More insight on the rich structure of the symmetrical family of inequalities \eqref{IGeneralsym} is given by noticing that they can also be written in a recursive form for $n\geq3$
\begin{equation}\begin{split}\label{IGeneralsym2}
I^{A_1A_2...A_n}_{\textrm{sym}} = \sum\limits_{i = 1}^{n} I^{\textrm{all}\setminus A_i}_{0|0} - (n-2)P(\vec{0}|\vec{0}) \leq 0
\end{split}\end{equation}
where $I^{\textrm{all}\setminus A_i}_{0|0}$ is the symmetrical inequality for $n-1$ observers lifted to $n$ of them with observer $A_i$'s input and outcome set to $0$. If $n=3$ for example, $I^{\textrm{all}\setminus A_i}_{0|0}$ corresponds to the CHSH inequality lifted to 3 parties \eqref{I3Liftee}.
The proof of the equivalence between the direct expression \eqref{IGeneralsym} and the recursive one \eqref{IGeneralsym2} can be found in Appendix~\ref{APP:Alternative}.

In other words, operationally a violation of the symmetrical family $I^{A_1A_2...A_n}_{\textrm{sym}}$ can also be understood as a violation of more than $n-2$ inequalities $I^{\textrm{all}\setminus A_i}_{0|0}$ between $n-1$ parties lifted to $n$ parties -- instead of ${n-1\choose 2}$ bipartite ones $I^{A_iA_j}_{\vec{0}|\vec{0}}$ lifted to $n$ parties. Since this argument can be used recursively, one concludes that GMNL correlations violating our inequalities violate numerous inequalities between subset of $m$ parties lifted to $n$ parties, for all $m$.

Observe that, similar to the tripartite case, the generalised families $I^{A_1\ldots A_n}_{\textrm{sym}}$ \eqref{IGeneralsym} and $I^{A_1\ldots A_n}_{\textrm{\ding{192}}}$ \eqref{IGeneral} are also saturated by local distributions. The local strategy is the same as for $n=3$: every party $A_i$ outputs $a_i=1$ for all measurements $x_i$. It follows that the local and biseparable bounds of our families of inequalities coincide for all $n$. Again, it also implies that our inequalities are tangent to the set of local and biseparable correlations.

\section{Detection of genuine multipartite nonlocality in pure states}

Let us now analyse how the symmetric family of Bell inequalities $I^{A_1\ldots A_n}_{\textrm{sym}}$ sheds light on the relation between GME and GMNL of pure states.  The goal is to understand  whether it is possible to find local measurements on \emph{any} pure GME state that generate GMNL correlations. Our Bell inequalities seem fit to prove this result since for any pure GME state there exist local projections on \emph{any} $n-2$ parties that leave the remaining two in a pure entangled state \cite{PopescuGeneric}, which can in turn be used to violate the CHSH inequality~\cite{GisinThrm}.  The main difficulty in proving the result in full generality is to find local measurements that simultaneously perform the desired projections but are also fit to violate the CHSH terms. For $n=2$ our two families of inequalities coincide with the CHSH inequality, which was used to prove the equivalence between nonlocality and pure state entanglement~\cite{GisinThrm}. For $n=3$, there is numerical evidence that this holds for GME three-qubits pure states \cite{Bancal:PRA:014102} using the symmetrical family $I^{A_1A_2A_3}_{\textrm{sym}}$ \eqref{IA1A2A3}.\\

We consider the generalisation of these results to the scenario with $n=4$ parties, where we obtained numerical evidence that all four-qubit systems in a pure GME state generate distributions violating the Bell inequality $I^{A_1\ldots A_4}_{\textrm{sym}}$ \eqref{IGeneralsym}. For this, we have randomly drawn four qubit states and numerically searched for local measurements leading to a violation of our inequality. Note that the set of separable states is of volume zero in the state space.\\


We now proceed to show analytically that a large class of pure GME states of the GHZ family~\cite{greenberger1989going} can generate GMNL correlations, for all number of parties $n\geq3$, as detected by the symmetrical family of inequalities $I_{\textrm{sym}}^{A_1\ldots A_n}$ \eqref{IGeneralsym} . 

\begin{theorem}\label{GG:THRM1} All pure GME states of the form
\begin{equation}\begin{split}\label{GG:greenberger1989going}
\ket{GHZ^n}_{\theta} = \cos\theta\ket{0}^{\otimes n} - \sin\theta\ket{1}^{\otimes n}
\end{split}\end{equation}
with $\theta \in ]0,\frac{\pi}{4}[$ violate the Bell inequality $I_{\textrm{sym}}^{A_1\ldots A_n}$ \eqref{IGeneralsym} for all $n\geq 3$.
All parties $A_i$ make the same projective measurements, $\bra{m_{a_i|x_i}}= \bra{m_{a|x}}$,  defined by
\begin{align}
\begin{split}\label{Measurements}
\bra{m_{0|x}} = \cos\alpha_x\bra{0}+\sin\alpha_x\bra{1}\\
\bra{m_{1|x}} = \sin\alpha_x\bra{0}-\cos\alpha_x\bra{1}
\end{split}
\end{align}
where
\begin{align}
\begin{split}\label{angles}
\alpha_0 &= \arctan(\tan^{-\frac{3}{3n-4}}(\theta))\\
\alpha_1 &= -\arctan(\tan^{-\frac{1}{3n-4}}(\theta))\,.
\end{split}
\end{align}
In other words, all states of the form \eqref{GG:greenberger1989going} that are GME are GMNL.
\end{theorem}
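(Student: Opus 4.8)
The plan is to exploit the full permutation symmetry of both the state \eqref{GG:greenberger1989going} and the measurement assignment \eqref{Measurements}. Since every party $A_i$ performs identical measurements and $\ket{GHZ^n}_\theta$ is invariant under any permutation of the parties, relabelling maps each lifted CHSH term $I^{A_iA_j}_{\vec{0}|\vec{0}}$ \eqref{InLiftee} onto a single common value $L \equiv I^{A_1A_2}_{\vec{0}|\vec{0}}$. As the double sum in \eqref{IGeneralsym} runs over all ${n \choose 2}$ pairs, this collapses the inequality to
\begin{equation}
I^{A_1\ldots A_n}_{\textrm{sym}} = {n \choose 2}\, L - {n-1 \choose 2}\, P(\vec{0}|\vec{0}).
\end{equation}

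Next I would write $L = P(\vec{0}|\vec{0}) - T$, where $T$ collects the three subtracted probabilities in \eqref{InLiftee}, namely $P(1_10_2\vec{0}|1_10_2\vec{0})$, $P(0_11_2\vec{0}|0_11_2\vec{0})$ and $P(0_10_2\vec{0}|1_11_2\vec{0})$. Using the elementary identity ${n \choose 2} - {n-1 \choose 2} = n-1$, this yields the clean form
\begin{equation}
I^{A_1\ldots A_n}_{\textrm{sym}} = (n-1)\,P(\vec{0}|\vec{0}) - {n \choose 2}\,T.
\end{equation}
The strategy is then exactly a multipartite Hardy argument, in the spirit of Theorem \ref{GG:THRM2}: I would show that the prescribed angles make all three contributions to $T$ vanish while keeping $P(\vec{0}|\vec{0})>0$, so that the right-hand side is strictly positive and the Hardy conditions \eqref{HARDYMEASMAIN} are realised simultaneously on every pair.

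To carry this out I would expand each amplitude through $\langle m_{0|x}|0\rangle=\cos\alpha_x$, $\langle m_{0|x}|1\rangle=\sin\alpha_x$, $\langle m_{1|x}|0\rangle=\sin\alpha_x$, $\langle m_{1|x}|1\rangle=-\cos\alpha_x$. The two ``single-flip'' terms coincide by the $A_1\leftrightarrow A_2$ symmetry and vanish precisely when $\tan\theta = -\tan\alpha_1\,\cot^{\,n-1}\alpha_0$, whereas the ``double-flip'' term vanishes when $\tan\theta = \cot^2\alpha_1\,\cot^{\,n-2}\alpha_0$. Substituting $\tan\alpha_0 = \tan^{-3/(3n-4)}\theta$ and $\tan\alpha_1 = -\tan^{-1/(3n-4)}\theta$ from \eqref{angles}, and collecting exponents, both conditions reduce to the single identity $\tan^{(3n-4)/(3n-4)}\theta=\tan\theta$. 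This exponent bookkeeping is the heart of the argument, and I expect it to be the only delicate point: the exponents $-3/(3n-4)$ and $-1/(3n-4)$ are dictated precisely so that one pair of angles annihilates both Hardy conditions for every $n$ at once, giving $T=0$.

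It then only remains to check positivity of $P(\vec{0}|\vec{0})=(\cos\theta\,\cos^n\alpha_0-\sin\theta\,\sin^n\alpha_0)^2$. This square vanishes iff $\tan\theta=\cot^n\alpha_0=\tan^{3n/(3n-4)}\theta$; but for $\theta\in\,]0,\frac{\pi}{4}[$ one has $\tan\theta\in\,]0,1[$ and the exponent $3n/(3n-4)>1$, so $\tan^{3n/(3n-4)}\theta<\tan\theta$ strictly by monotonicity, forcing $P(\vec{0}|\vec{0})>0$. Hence $I^{A_1\ldots A_n}_{\textrm{sym}}=(n-1)\,P(\vec{0}|\vec{0})>0$ for every GME member of the family, which is the claim. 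Everything outside the trigonometric exponent calculation is either permutation symmetry or an elementary monotonicity estimate.
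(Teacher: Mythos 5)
Your proposal is correct and follows essentially the same route as the paper's proof in Appendix~\ref{APP3:GME-GMNL}: collapse the inequality by permutation symmetry to $(n-1)P(\vec{0}|\vec{0})-\binom{n}{2}T$, verify that the prescribed angles realise the Hardy conditions $T=0$, and check $P(\vec{0}|\vec{0})>0$. Your exponent bookkeeping matches the paper's choice of $\alpha_0,\alpha_1$, and your explicit monotonicity argument for $P(\vec{0}|\vec{0})>0$ is in fact slightly more detailed than the paper's, which simply asserts positivity on $]0,\pi/4[$.
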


\begin{proof} A detailed and constructive proof of this theorem can be found in Appendix \ref{APP3:GME-GMNL}. The key point is to impose the local measurements to be the same for every party, which makes the joint outcome distribution invariant under permutations of the parties (since the state $\ket{GHZ^n}_{\theta}$ \eqref{GG:greenberger1989going} also has this invariance). This symmetry simplifies the problem and allowed us to find an analytical solution.
\end{proof} 

Interestingly, the only GME pure state of this family for which our construction fails is the maximally entangled state ($\theta = \pi/4$), which is already known to generate GMNL for any number of observers \cite{BancalQuantifying}. We have however found, numerically, several sets of measurements on this state that lead to distributions  violating our inequality, but the amount of symmetries is reduced. Interestingly, Theorem \ref{GG:THRM1} implies that even states that are almost separable ($\theta \rightarrow 0$) can be used to generate GMNL correlations for any number of observers.\\


It is important to observe that we already knew from \cite{Chinese1} that all $n$-qubit systems in a GME symmetric pure state are GMNL, which is a more general result than Theorem \ref{GG:THRM1}. In particular, for three-qubit systems, the problem was completely solved: all three-qubit systems in a pure GME state can exhibit GMNL \cite{Chinese2}. These results rely however on the violation of two families of Hardy-like paradoxes witnessing GMNL, making it untestable in an experiment  (where even the smallest imperfections lead to values $P(ab|xy)= \epsilon > 0$). Indeed, the correlations obtained in~\cite{Yu3,Chinese1} are nonlocal, because they violate a Hardy paradox, and therefore should also violate a Bell inequality in a robust way. However, the form of this inequality is unknown. Although clearly not as general, our results are testable and might lead the way to a complete generalisation of Gisin's theorem for $n$ parties.

\section{Constructing genuine multipartite Bell inequalities from different seeds}\label{OtherSeed}

The construction was so far done using as ``seed" the CHSH inequality $I^{A_1A_2}$ \eqref{IA1A2} to  build new families of inequalities. We now show the versatility of our technique by using different inequalities as seed.  In general, any inequality $S^{A_1\ldots A_m}$ for $m<n$ parties that can be written as
\begin{equation}\begin{split}\label{GG:UsefulIneqs}
S^{A_1\ldots A_m}\big(P(\vec{a}|\vec{x})\big) &=\\= P(\vec0|\vec0) &- \sum\limits_{\substack{\vec{a},\vec{x} \neq
                \vec0,\vec0}} \beta_{\vec a}^{\vec x} P(\vec a|\vec x) \leq \mathcal{B}_{2-sep}= 0\end{split}
\end{equation}
such that $\beta_{\vec a}^{\vec x} \geq 0 \hspace{0.2cm}$,$ \forall {\vec a,\vec x \neq \vec0,\vec0}$, and with biseparable bound $\mathcal{B}_{2-sep}=0$, is a valid seed to build a Bell inequality for $n$ parties. To see that, note that the key ingredient in our proofs is, again, that
\begin{equation}\label{GG:UsefulIneqs2}
\bar{S}^{A_1\ldots A_m} \equiv S^{A_1\ldots A_m} - P(\vec0|\vec0) =  - \sum\limits_{\substack{\vec a,\vec x \neq \vec 0,\vec 0}} \beta_{\vec a}^{\vec x} P(\vec a|\vec x) \leq 0
\end{equation}
for any probability distribution. (This implies that the lifting of \eqref{GG:UsefulIneqs2} to more parties is also always negative $\bar{S}^{A_1\ldots A_n}_{\vec{0}|\vec{0}} \leq 0$, which we used frequently in our proofs). 
Although the condition for a Bell inequality to be used as a seed is fairly simple, we do not know of a systematic way to find out which inequalities can be written in the form \eqref{GG:UsefulIneqs}.
We will now illustrate our construction with two different seeds.\\

\textbf{The tilted CHSH inequality.--} As a first example, we use the ``tilted CHSH" inequality  \cite{Acin2012} as the new seed. This inequality is a variation of the CHSH inequality with two free parameters, used for randomness certification~\cite{Acin2012} and self-testing of partial entangled states \cite{bamps2015sum}. By setting one of the parameters to 1, the tilted CHSH inequality can be written in the form \eqref{GG:UsefulIneqs}
\begin{equation}\begin{split}\label{Ibeta}
I^{A_1A_2}_{\beta} = P(00|00)-P(01|01)-P(10|10)\\-P(00|11)-\frac{\beta}{2}P_{A_1}(1|0) \leq 0
\end{split}\end{equation}
where $\beta \geq 0$ and $P(a_1|x_1) = \sum\limits_{a_2} P(a_1a_2|x_1x_2) \hspace{0.2cm},\forall x_2$, is the marginal distribution of party $A_1$. Clearly, the inequality satisfies condition \eqref{GG:UsefulIneqs2}.
Starting from the new seed $I^{A_1A_2}_{\beta}$ \eqref{Ibeta}, we construct two new families of GMNL Bell inequalities.

\begin{theorem}\label{GG:NewSeedTHRM}
The families of inequalities
\begin{equation}\begin{split}\label{IGeneralsymBeta}
I^{A_1\ldots A_n}_{\beta,\textrm{sym}} =\sum\limits_{i=1}^{n-1}\sum\limits_{j>i}^{n} I^{A_iA_j}_{\beta,\vec{0}|\vec{0}} - {n-1\choose 2}  P(\vec{0}|\vec{0}) \leq 0
\end{split}\end{equation}
\begin{equation}\begin{split}\label{IGeneralBeta}
I^{A_1\ldots A_n}_{\beta,\textrm{\ding{192}}} = \sum\limits_{j>1}^{n} I^{A_1A_j}_{\beta,\vec{0}|\vec{0}} - (n-2)  P(\vec{0}|\vec{0}) \leq 0
\end{split}\end{equation}
witness GMNL for any number $n\geq3$ of parties.
\end{theorem}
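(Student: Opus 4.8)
The plan is to mimic exactly the structure of the proof of Theorem~\ref{NpartyisGMNL}, since the construction of the new families \eqref{IGeneralsymBeta} and \eqref{IGeneralBeta} is formally identical to \eqref{IGeneralsym} and \eqref{IGeneral}, with the only change being that the CHSH seed $I^{A_iA_j}_{\vec{0}|\vec{0}}$ is replaced by the lifted tilted-CHSH seed $I^{A_iA_j}_{\beta,\vec{0}|\vec{0}}$. First I would invoke convexity of the biseparable set \eqref{biseparable} to reduce the problem to checking that every extremal product distribution $P(\vec{a}_g|\vec{x}_g)P(\vec{a}_{\bar g}|\vec{x}_{\bar g})$ satisfies both inequalities. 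The whole argument then hinges on re-establishing, for the new seed, the two properties that drove the original proof: that the lifted seed is non-positive whenever the two relevant parties sit in different groups, and the ``killing'' identity analogous to \eqref{Ibar}.

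The key observation is that the tilted CHSH inequality \eqref{Ibeta} was explicitly chosen to satisfy the general seed condition \eqref{GG:UsefulIneqs}, with all coefficients $\beta_{\vec a}^{\vec x}\geq 0$ and biseparable bound $\mathcal{B}_{2\textrm{-sep}}=0$. Hence by \eqref{GG:UsefulIneqs2} its ``barred'' version is non-positive on any distribution, and this survives lifting, giving $\bar{I}^{A_iA_j}_{\beta,\vec{0}|\vec{0}}\equiv I^{A_iA_j}_{\beta,\vec{0}|\vec{0}}-P(\vec{0}|\vec{0})\leq 0$ for any probability distribution. This is the exact analogue of \eqref{Ibar} and is the only place the specific form of the seed enters. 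I would also record that when $A_i$ and $A_j$ belong to different groups the lifted seed factorises across the cut, so $I^{A_iA_j}_{\beta,\vec{0}|\vec{0}}\leq 0$ on such extremal distributions; this follows from the lifting property proven in Appendix~\ref{APP1:liftings}, exactly as for the CHSH seed.

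With these two facts in hand, the counting step is unchanged. For the symmetric family \eqref{IGeneralsymBeta}, a split into groups of sizes $m$ and $n-m$ leaves at most ${m\choose 2}+{n-m\choose 2}\leq{n-1\choose 2}$ potentially positive lifted seeds, so
\begin{equation}
I^{A_1\ldots A_n}_{\beta,\textrm{sym}}\big(P(\vec{a}_g|\vec{x}_g)P(\vec{a}_{\bar g}|\vec{x}_{\bar g})\big)\leq \sum \bar{I}^{A_iA_j}_{\beta,\vec{0}|\vec{0}}\leq 0\,,
\end{equation}
and for the centered family \eqref{IGeneralBeta} there are at most $n-2$ parties grouped with $A_1$, so the $-(n-2)P(\vec{0}|\vec{0})$ term again absorbs every positive contribution via \eqref{Ibar}'s analogue. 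I expect no genuinely hard step: the result is essentially a corollary of the general seed criterion \eqref{GG:UsefulIneqs}, and the only thing to verify with care is that the tilted seed \eqref{Ibeta} really does meet that criterion, i.e.\ that the extra marginal term $-\frac{\beta}{2}P_{A_1}(1|0)$ has a non-negative coefficient and that the biseparable bound is indeed $0$ --- both of which are immediate from $\beta\geq 0$. The mild subtlety worth a remark is that the no-signalling assumption at the level of the $P_{A_iA_j}(\cdot|\cdot,\lambda)$ makes the marginal $P_{A_1}(1|0)$ well defined inside each group, so the lifted tilted seed is unambiguous.
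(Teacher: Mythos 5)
Your proposal is correct and follows essentially the same route as the paper, whose proof of this theorem is just the observation that the argument for Theorem~\ref{NpartyisGMNL} goes through verbatim once the seed satisfies property \eqref{GG:UsefulIneqs2}; you have simply spelled out the two ingredients (non-positivity of the lifted seed across a cut, and the barred identity $\bar{I}^{A_iA_j}_{\beta,\vec{0}|\vec{0}}\leq 0$) and the unchanged counting step in more detail. Your extra check that the marginal term $-\frac{\beta}{2}P_{A_1}(1|0)$ expands into probability terms with non-negative coefficients, so that \eqref{Ibeta} really fits the template \eqref{GG:UsefulIneqs}, is exactly the point the paper leaves implicit.
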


\begin{proof}
The proof that these two families of inequalities indeed witness GMNL for all $n$ is exactly the same as for the families \eqref{IGeneralsym} and \eqref{IGeneral}, but now using the seed $I_\beta^{A_1A_2}$ and property \eqref{GG:UsefulIneqs2}.
\end{proof}

\textbf{A tripartite inequality as seed.--} As a second example, we illustrate how to use a multipartite inequality as a seed. We chose a Bell inequality for three parties -- that witnesses GMNL in tripartite correlations -- that belongs to  class  $5$ of \cite{Bancal:PRA:014102} and that can be written as
\begin{equation}\begin{split}\label{Ifor3}
I^{A_1A_2A_3}_{\textrm{tri}} = &P(000|000) - P(010|111) - \\-&P(000|011)
 -  P(001|001) - P(100|110) -\\&- P(010|010) - P(100|100) \leq 0 
\end{split}\end{equation}
and hence satisfying condition \eqref{GG:UsefulIneqs2}. This allows us to construct, again, two new families of Bell inequalities witnessing GMNL for any $n \geq 4$.

\begin{theorem}\label{GG:TripartiteSeedTHRM}
The families of inequalities
\begin{equation}\begin{split}\label{IGeneralsym3seed}
I^{A_1\ldots A_n}_{\textrm{tri},\textrm{sym}} =\sum\limits_{i=1}^{n-2}\sum\limits_{j>i}^{n-1} \sum\limits_{k>j}^{n} I^{A_iA_jA_k}_{\textrm{tri},\vec{0}|\vec{0}} - {n-1\choose 3}  P(\vec{0}|\vec{0}) \leq 0
\end{split}\end{equation}
\begin{equation}\begin{split}\label{IGeneral3seed}
I^{A_1\ldots A_n}_{\textrm{tri},\textrm{\ding{192}\ding{193}}} = \sum\limits_{j>2}^{n} I^{A_1A_2A_j}_{\textrm{tri},\vec{0}|\vec{0}} - (n-3)  P(\vec{0}|\vec{0}) \leq 0
\end{split}\end{equation}
\end{theorem}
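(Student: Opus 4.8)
The plan is to run exactly the template used in the proofs of Theorems~\ref{3partyisGMNL} and \ref{NpartyisGMNL}, with the tripartite inequality $I^{A_1A_2A_3}_{\textrm{tri}}$ \eqref{Ifor3} now playing the role that the bipartite CHSH seed played there, and to show that both families witness GMNL for $n\ge4$. First I would use the convexity of the biseparable set \eqref{biseparable} to reduce to a single extremal product distribution $P(\vec a_g|\vec x_g)P(\vec a_{\bar g}|\vec x_{\bar g})$, with group $g$ of size $m$ and complement $\bar g$ of size $n-m$, $1\le m\le n-1$. The two ingredients I would carry over are: (i) the lifted relation $\bar S^{A_iA_jA_k}_{\textrm{tri},\vec0|\vec0}=I^{A_iA_jA_k}_{\textrm{tri},\vec0|\vec0}-P(\vec0|\vec0)\le0$, valid for every probability distribution, which is just \eqref{GG:UsefulIneqs2} lifted to $n$ parties; and (ii) the lifting property (the tripartite analogue of \eqref{GGG:genralLiftings0}, in the spirit of Appendix~\ref{APP1:liftings}) stating that, because $I^{A_iA_jA_k}_{\textrm{tri}}$ witnesses \emph{tripartite} GMNL, the term $I^{A_iA_jA_k}_{\textrm{tri},\vec0|\vec0}$ evaluated on the extremal product distribution is positive only when $A_i,A_j,A_k$ all lie in one group: if the triple straddles the cut, the effective tripartite distribution factorises across that cut and is therefore biseparable, so the seed returns a non-positive value.

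For the symmetric family \eqref{IGeneralsym3seed}, the triples contained entirely in one group are the only ones that can contribute positively, and there are $\binom{m}{3}+\binom{n-m}{3}$ of them. I would then show this count never exceeds the number $\binom{n-1}{3}$ of subtracted $P(\vec0|\vec0)$ terms: the map $m\mapsto\binom{m}{3}+\binom{n-m}{3}$ is convex (each summand has nonnegative discrete second difference $m-1\ge0$) and symmetric about $m=n/2$, hence maximised at the endpoints $m=1,n-1$, where it equals $\binom{n-1}{3}$. Pairing each potentially positive triple with one subtracted $P(\vec0|\vec0)$ via~(i) then yields
\[
I^{A_1\ldots A_n}_{\textrm{tri},\textrm{sym}}\big(P(\vec a_g|\vec x_g)P(\vec a_{\bar g}|\vec x_{\bar g})\big)\le\sum_{\text{triples in }g}\bar S^{A_iA_jA_k}_{\textrm{tri},\vec0|\vec0}\le0 .
\]

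For the centered family \eqref{IGeneral3seed} the sum runs only over the triples $\{1,2,j\}$, so by~(ii) a term is positive only when $A_1,A_2$ and $A_j$ share a group. If $A_1$ and $A_2$ lie in different groups every term is non-positive and the bound is immediate; if they lie in the same group $g$, the positive triples are those $j>2$ with $A_j\in g$, of which there are $|g|-2\le(n-1)-2=n-3$, matching the $(n-3)P(\vec0|\vec0)$ that are subtracted. The same pairing via~(i) closes the argument. The restriction $n\ge4$ is precisely the regime where these constructions are genuinely new, the subtracted coefficients $\binom{n-1}{3}\ge1$ and $n-3\ge1$ being strictly positive rather than the degenerate $n=3$ case in which each family collapses to the seed itself.

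The algebra behind~(i) is routine. The two steps that deserve care are establishing the lifting property~(ii) for a genuinely \emph{multipartite} (rather than bipartite) seed, and the combinatorial convexity lemma $\binom{m}{3}+\binom{n-m}{3}\le\binom{n-1}{3}$. I expect~(ii) to be the real obstacle: one must verify that fixing the spectator inputs and outcomes to $0$ indeed produces a biseparable (possibly sub-normalised) effective tripartite distribution whenever the triple crosses the cut, and then invoke homogeneity of the seed functional together with its zero biseparable bound to conclude non-positivity. This is exactly the point at which the tripartite nature of the seed and the general admissibility condition \eqref{GG:UsefulIneqs} on seeds are actually used.
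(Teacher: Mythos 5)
Your proposal is correct and follows essentially the same route as the paper: reduce to extremal product distributions by convexity, observe that a lifted seed term can only be positive when all three parties of the triple lie in one group, bound the number of such triples by $\binom{n-1}{3}$ (resp.\ $n-3$) via the optimal $(n-1,1)$ split, and cancel each potentially positive term with one subtracted $P(\vec 0|\vec 0)$ using $\bar S^{A_iA_jA_k}_{\textrm{tri},\vec 0|\vec 0}\le 0$. The paper states the combinatorial maximisation without proof, whereas you supply the discrete-convexity argument explicitly, but this is an elaboration of the same proof rather than a different one.
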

witness GMNL for  $n\geq 4$ parties.

\begin{proof}
Although this proof is again analogous to the previous families, a tripartite inequality as the seed changes the weights associated to the term $P(\vec0|\vec0)$. 
For the symmetric family \eqref{IGeneralsym3seed}, consider that a biseparable probability distribution \eqref{biseparable} can violate at most ${n-1\choose 3}$ inequalities $I^{A_iA_jA_k}_{\textrm{tri},\vec{0}|\vec{0}}$ between three different parties. This comes from the fact that the best for a biseparable distribution is a grouping $g=\{1,2,...,n-1\}$, $\bar{g}=\{n\}$ of the parties, allowing a maximum of ${n-1\choose 3}$ to potentially violate the inequality $I^{A_iA_jA_k}_{\textrm{tri},\vec{0}|\vec{0}}$.
For the centered family \eqref{IGeneral3seed}, consider that there are at most $n-3$ inequalities $I^{A_1A_2A_j}_{\textrm{tri},\vec{0}|\vec{0}}$ that can potentially be violated for a grouping $g=\{1,2,...,n-1\}$, $\bar{g}=\{n\}$ of the parties. 
\end{proof}



It would obviously be interesting to explore up to which extent the inequalities that can be built -- as the ones in \eqref{IGeneralsymBeta},\eqref{IGeneralBeta}, \eqref{IGeneralsym3seed} and \eqref{IGeneral3seed} -- are useful to witness GMNL from quantum states. We leave this direction of research open for further work. Finally, it would also be insightful to consider seeds allowing for more measurement choices and/or outcomes.

\section{Constructing $m$-way nonlocality Bell inequalities}\label{GG:MoreKway}
Now we show how our construction also allows one to build families of Bell inequalities that witness intermediate types of multipartite nonlocality. Indeed, in the multipartite scenario it is possible to define a hierarchy of multipartite correlations taking into account the extent to which these are multipartite nonlocal. This can be measured, for example, by notions such as $m-$way (non)locality or $m-$separability of correlations \cite{Bancal:PRA:014102,BancalQuantifying}. Instead of asking whether given correlations can be decomposed into (convex mixtures) of two groups as in \eqref{biseparable}, one can ask whether the correlations can be decomposed into $m$ groups. Correlations that are decomposable into $m<n$ groups are then less multipartite nonlocal than other correlations that do not allow for such decomposition.\\

Correlations $P(\vec{a}|\vec{x})$ are said to be $m-$separable (or $m-$way local), i.e. decomposable into $m$ groups, if
\begin{equation}\begin{split}\label{mseparable}
P_{\textrm{m-sep}}(\vec{a}|\vec{x}) = \sum\limits_{k}\sum\limits_{\lambda_k} q_k(\lambda_k) \prod\limits_{i=1}^m P(\vec{a}_{k_i}|\vec{x}_{k_i},\lambda_{k})
\end{split}\end{equation}
where $\sum\limits_{k}\sum\limits_{\lambda_k} q_k(\lambda_k) = 1$, $q_k(\lambda_k) \geq 0$. Here, the variable $k$ defines a \textit{grouping} of the $n$ parties into $m$ pairwise disjoint and non-empty groups $k_i$, $i = 1,2,...m$: $|k_i|>0 \hspace{0.2cm} \forall i$, $k_i \cap k_j = \emptyset \hspace{0.2cm} \forall i \neq j$ and $\sum\limits_{i=1}^m |k_i| = n$. Biseparable correlations \eqref{biseparable} for $m=2$ can be decomposed into (convex mixtures of) two group of parties $k_1 = g$ and $k_2 = \bar{g}$.\\

For the sake of simplicity, we use the seed inequality $I^{A_1A_2}$ \eqref{IA1A2} to generalise our two families of Bell inequalities, symmetric and centered, for the detection of $m$-way nonlocality.

\begin{theorem}\label{GG:msepTHRM}
The families of inequalities for $n$ parties
\begin{equation}\begin{split}\label{ImsepSYM}
I^{A_1\ldots A_n}_{\textrm{m-sep},\textrm{sym}} =\sum\limits_{i=1}^{n-1}\sum\limits_{j>i}^{n} I^{A_iA_j}_{\vec{0}|\vec{0}} - {n+1-m\choose 2}  P(\vec{0}|\vec{0}) \leq 0
\end{split}\end{equation}
\begin{equation}\begin{split}\label{Imsep}
I^{A_1\ldots A_n}_{\textrm{m-sep},\textrm{\ding{192}}} = \sum\limits_{j>1}^{n} I^{A_1A_j}_{\textrm{m},\vec{0}|\vec{0}} - (n-m)  P(\vec{0}|\vec{0}) \leq 0
\end{split}\end{equation}
witness $m-$way nonlocality (or non $m-$separability) for all $n\geq3, m<n$.
\end{theorem}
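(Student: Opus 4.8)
The plan is to follow the architecture of the proof of Theorem~\ref{NpartyisGMNL} essentially verbatim, the only genuinely new ingredient being a combinatorial count that fixes the coefficient in front of $P(\vec{0}|\vec{0})$. First I would invoke convexity of the set of $m$-separable distributions \eqref{mseparable} to reduce both claims to the extremal distributions, which are products $\prod_{i=1}^{m} P(\vec{a}_{k_i}|\vec{x}_{k_i})$ over a fixed grouping $\{k_1,\ldots,k_m\}$ of the $n$ parties into $m$ non-empty blocks. For such a product, whenever two parties $A_i$ and $A_j$ lie in different blocks their joint marginal factorises as $P_{A_i}(a_i|x_i)P_{A_j}(a_j|x_j)$, so the lifted seed obeys $I^{A_iA_j}_{\vec{0}|\vec{0}} \leq 0$, exactly as argued for the biseparable case; hence only pairs sitting inside a common block can contribute positively. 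I would then reuse the fact $\bar{I}^{A_iA_j}_{\vec{0}|\vec{0}} \equiv I^{A_iA_j}_{\vec{0}|\vec{0}} - P(\vec{0}|\vec{0}) \leq 0$, valid for any distribution by \eqref{Ibar}, to absorb each positive term into one copy of $P(\vec{0}|\vec{0})$.

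The crux is to bound, over all admissible $m$-groupings, the number of pairs lying inside a common block, since this dictates how many copies of $P(\vec{0}|\vec{0})$ must be subtracted. For the symmetric family the count is $\sum_{i=1}^{m}\binom{|k_i|}{2}$ subject to $|k_i|\geq 1$ and $\sum_i |k_i| = n$. Because $x\mapsto\binom{x}{2}$ is convex, I expect this sum to be maximised at the most unbalanced partition -- one block of size $n-m+1$ and $m-1$ singletons -- yielding $\binom{n+1-m}{2}$, exactly the coefficient appearing in \eqref{ImsepSYM} (and reducing to $\binom{n-1}{2}$ when $m=2$, recovering \eqref{IGeneralsym}). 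For the centered family only pairs of the form $(A_1,A_j)$ enter, so the relevant quantity is $s-1$, with $s$ the size of the block containing $A_1$; since the remaining $m-1$ blocks must be non-empty, $s\leq n-m+1$, so the count is at most $n-m$, matching the coefficient in \eqref{Imsep}.

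With these pieces in hand the assembly is immediate. For any extremal $m$-separable distribution I would drop the non-positive cross-block terms, and -- using that the number of same-block pairs is at most the subtracted coefficient together with $P(\vec{0}|\vec{0})\geq 0$ -- rewrite the remaining terms as a sum of $\bar{I}^{A_iA_j}_{\vec{0}|\vec{0}}$'s, each $\leq 0$, giving
\begin{equation}\begin{split}
I^{A_1\ldots A_n}_{\textrm{m-sep},\textrm{sym}} &\leq \sum_{\text{same block}} I^{A_iA_j}_{\vec{0}|\vec{0}} - \binom{n+1-m}{2}P(\vec{0}|\vec{0}) \\ &\leq \sum_{\text{same block}} \bar{I}^{A_iA_j}_{\vec{0}|\vec{0}} \leq 0\,,
\end{split}\end{equation}
and identically, writing $g_1$ for the block containing $A_1$,
\begin{equation}\begin{split}
I^{A_1\ldots A_n}_{\textrm{m-sep},\textrm{\ding{192}}} &\leq \sum_{j\in g_1} I^{A_1A_j}_{\vec{0}|\vec{0}} - (n-m)P(\vec{0}|\vec{0}) \\ &\leq \sum_{j\in g_1} \bar{I}^{A_1A_j}_{\vec{0}|\vec{0}} \leq 0\,.
\end{split}\end{equation}
Every $m$-separable distribution therefore respects both bounds, and a violation certifies non $m$-separability.

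I expect the main obstacle to be precisely the combinatorial optimisation: rigorously verifying that the most unbalanced grouping maximises the number of intra-block pairs, which I would justify via the convexity of $\binom{x}{2}$ (equivalently, a majorisation/Schur-convexity argument) under the non-emptiness constraint on the blocks. Every other step is a transcription of the mechanism already used for the biseparable case in Theorem~\ref{NpartyisGMNL}, so I do not anticipate further difficulties there.
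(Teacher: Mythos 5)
Your proposal is correct and follows essentially the same route as the paper's proof: reduce by convexity to extremal product distributions, note that only intra-block pairs can make $I^{A_iA_j}_{\vec{0}|\vec{0}}$ positive, and cancel each such term with one $P(\vec{0}|\vec{0})$ via $\bar{I}^{A_iA_j}_{\vec{0}|\vec{0}}\leq 0$, with the coefficients fixed by maximising the intra-block pair count over groupings. If anything, your explicit justification of the combinatorial optimum via the convexity of $\binom{x}{2}$ is more careful than the paper, which simply asserts that the most unbalanced grouping is optimal.
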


\begin{proof}
The proofs follow the same line as the proofs for the other families of inequalities we have already constructed. By making $m$ groups instead of $2$, one needs to count the maximum number of pairs of parties $A_iA_j$ that can be made inside all the $m$ groups for the family \eqref{ImsepSYM}. Indeed, only pairs $A_iA_j$ inside the same group can potentially violate a lifted inequality $I^{A_iA_j}_{\vec{0}|\vec{0}}$. The best way to group $n$ parties into $m$ groups, in order to maximise the number of such pairs of parties, is to put $n-m+1$ parties into one group and the remaining $m-1$ ones into one group each. In this way, a maximum amount of ${n+1-m\choose 2}$ inequalities $I^{A_iA_j}_{\vec{0}|\vec{0}}$ can potentially be violated, but these can be cancelled by the ${n+1-m\choose 2}  P(\vec{0}|\vec{0})$ terms in \eqref{ImsepSYM} since $I^{A_iA_j}_{\vec{0}|\vec{0}}-P(\vec{0}|\vec{0}) \leq 0$.

For the family \eqref{Imsep}, one needs to count the maximum number of pairs $A_1A_j$ that can be made inside the group containing party $A_1$. By putting the maximal number of $n-m$ parties, plus party $A_1$, in one group, one gets that a maximum number of $n-m$ pairs $A_1A_j$ can be formed. This implies that a maximum amount of $n-m$ inequalities $I^{A_iA_j}_{\vec{0}|\vec{0}}$ can potentially be violated, but these are cancelled by the $(n-m)  P(\vec{0}|\vec{0})$ term in \eqref{Imsep} since $I^{A_iA_j}_{\vec{0}|\vec{0}} - P(\vec{0}|\vec{0}) \leq 0$.
\end{proof}

\section{Conclusion}

In this work, we have introduced a versatile technique to build Bell inequalities for the $n$-partite scenario. It consists of taking a ``seed'' -- a Bell inequality for $m<n$ parties obeying certain constraints -- to generate new families of Bell inequalities. Intuitively, the seed defines the nonlocal game that will be played by numerous groups of $m$ parties in the $n$-partite system. The specification of the sets of parties that are required to play the nonlocal game defines the level of multipartite nonlocality to be detected. Indeed, our construction can be used to witness $k$-nonlocal multipartite correlations, including the stronger notion of genuine multipartite nonlocality (where all the parties of the system are nonclassically correlated).

To illustrate the power of our construction, we have first used the CHSH inequality as the seed to build two new families of Bell inequalities, $I_\textrm{sym}$ and $I_{\textrm{\ding{192}}}$, for the detection of genuine multipartite nonlocality in systems with any number $n\geq3$ of parties. We showed that these families are particularly useful for the detection of genuine multipartite nonlocality (GMNL) in genuine multipartite entangled (GME) pure states. Indeed, for $n=3$ we proved that $I^{A_1A_2A_3}_{\textrm{\ding{192}}}$ is able to detect genuine tripartite nonlocality in all genuine entangled three-qubit pure states invariant under the permutation of two parties. We also showed that the family $I_\textrm{sym}$ witnesses GMNL in all pure GME GHZ-like states $\cos\theta\ket{0}^{\otimes n} - \sin\theta\ket{1}^{\otimes n}$, including those which are almost product. Note that for three parties, $I^{A_1A_2A_3}_\textrm{sym}$ coincides with a Bell inequality found in \cite{Bancal:PRA:014102}, where numerical evidence was given that it detects GMNL in all GME three-qubit pure states. We extended this numerical evidence to the four-partite case, using $I^{A_1\ldots A_4}_\textrm{sym}$ to detect GMNL in all pure GME four-qubit states. Taking into account these partial results and the operational meaning of the family $I_\textrm{sym}$, we conjecture that this single family of Bell inequalities can be used to show that all GME pure states display GMNL, establishing a direct relation between genuine multipartite notions of pure state entanglement and nonlocality. 

Apart from a proof in full generality, which seems not straightforward, it would be interesting to extend these results to more families of GME pure states. A possibility is to study the multipartite $W$-state, $\ket{W_n}=\frac{1}{\sqrt{n}} \sum\limits_{i=1}^n \ket{1}_i \otimes_{j \neq i} \ket{0}_j$, for which we already managed to obtain, numerically, violations up to $n=5$. It would also be interesting to use further the characterisation of all three-qubit systems in a pure state ~\cite{AcinThreeQubits} to obtain an analytical proof that our inequality for $n=3$ detects GMNL when these states are GME by detecting also the states which have no symmetries. Further numerical exploration, for more observers or systems of larger dimensions, is another possibility.

We have further used our technique to build families of Bell inequalities taking two different seeds: the bipartite ``tilted CHSH inequality'' and a tripartite inequality witnessing GMNL from \cite{Bancal:PRA:014102}. We have also shown how to design families of Bell inequalities that detect $m$-way multipartite nonlocality. In all cases, the construction is quite straightforward, which shows the potential and versatility of the method.

For future work, it would be interesting to further explore the applicability of the Bell inequalities built through our method. For instance, can the family of Bell inequalities with the tilted CHSH inequality as a seed \eqref{Ibeta} be used to self-test certain classes of multipartite entangled pure states? 
Also generalising our seed to more settings and/or outcomes has the potential for generating Bell inequalities fit to detect multipartite correlations in a whole range of new different scenarios. 




%
\begin{acknowledgments}
This work  is  supported  by  the  ERC  CoG  QITBOX,  the AXA Chair in Quantum Information Science, the Templeton Foundation,  the  Spanish  MINECO  (QIBEQI  FIS2016-80773-P  and  Severo Ochoa SEV-2015-0522), Fundacio Cellex, Generalitat de Catalunya (CERCA Program and SGR1381).
\end{acknowledgments}




%


\clearpage

\onecolumngrid
\appendix

\section{Lifting Bell inequalities to more observers}\label{APP1:liftings}The technique of lifting a Bell inequality consists in taking an inequality designed for a specific Bell set-up -- with a fixed number of observers, measurements and outcomes-- and extending it to a set-up with an increased number of any of these variables. Here we are interested in lifting a Bell inequality to more observers. We will briefly review its definition and prove one property of these inequalities, which is used in Theorems \ref{3partyisGMNL} and \ref{NpartyisGMNL}.\\

Consider a Bell inequality for two observers \eqref{Bellineq} that, without loss of generality, can be written as
\begin{equation}\label{Bellineq2}
\mathcal{I}=\sum_{a_1a_2x_1x_2}c_{a_1a_2}^{x_1x_2}P(a_1a_2|x_1x_2)\leq 0
\end{equation}
where observer $A_i$ performs a measurement $x_i$ and obtains an outcome $a_i$. The coefficients $c_{a_1a_2}^{x_1x_2}$ are real numbers and $P(a_1a_2|x_1x_2)$ represents the observed outcome distribution, for each measurement pair. 
A \textit{lifting} of this Bell inequality to $n$ observers consists in extending the expression \eqref{Bellineq2} by choosing a fixed measurement and outcome for observers $A_3,\ldots,A_n$:
\begin{equation}\label{liftineq}
\mathcal{I}_{\vec0|\vec0}^{A_1A_2}=\sum_{a_1a_2x_1x_2}c_{a_1a_2}^{x_1x_2}P(a_1a_2\vec0|x_1x_2\vec0)\leq 0
\end{equation}
where, without loss of generality, the fixed $n-2$ measurements and outcomes are set to $\vec0=\{0,\ldots,0\}$. Notice that $P(a_1a_2\vec0|x_1x_2\vec0)=P(a_1a_2|x_1x_2,\vec0,\vec0)P(\vec0|\vec0)$, where $P(\vec0|\vec0)$ is independent of measurements $x_1$ and $x_2$ according to the no-signalling principle. This means that if the conditional distribution $P_{\vec0,\vec0}(a_1a_2|x_1x_2)\equiv P(a_1a_2|x_1x_2,\vec0,\vec0)$ violates the bipartite inequality \eqref{Bellineq2}, it implies that the full distribution $P(\vec a|\vec x)$ violates the lifted inequality \eqref{liftineq}. Therefore, the nonlocality of the conditional distribution is a sufficient condition for the nonlocality of the full distribution.\\

We now want to show that any biseparable distribution \eqref{biseparable} where parties $A_1$ and $A_2$ belong to different groups of parties, $A_1\in g$ and $A_2\in \bar g$,  does not violate a lifted Bell inequality \eqref{liftineq}:
\begin{equation}
\mathcal{I}_{\vec0|\vec0}^{A_1A_2}(P_\textrm{bisep}^{g/\bar g})\leq0
\end{equation}
Since a Bell inequality is a linear function and $P_\textrm{bisep}^{g/\bar g}$ is convex, it is enough to show that the previous inequality holds for any pure biseparable distribution $P(\vec a_g|\vec x_g)P(\vec a_{\bar g}|\vec x_{\bar g})$. We have then 
\begin{equation}\begin{split}
\mathcal{I}_{\vec0|\vec0}^{A_1A_2}\big(P(\vec a_g|\vec x_g)P(\vec a_{\bar g}|\vec x_{\bar g})\big)=\sum_{a_1a_2x_1x_2}c_{a_1a_2}^{x_1x_2}P_g(a_1\vec0|x_1\vec0)P_{\bar g}(a_2\vec0|x_2\vec0)\\
=\sum_{a_1a_2x_1x_2}c_{a_1a_2}^{x_1x_2}P_{A_1}(a_1|x_1,\vec0,\vec0)P_{A_2}(a_2|x_2,\vec0,\vec0)P_{g\setminus{A_1}}(\vec0|\vec0)P_{\bar g\setminus{A_1}}(\vec0|\vec0)\leq0
\end{split}
\end{equation}
where the size of the vector $\vec0$ should be clear by the context. Notice that we have again used the fact that the distributions are no-signalling and that $P_{A_i}(a_i|x_i,\vec0,\vec0)$ are well-defined local distributions.

\section{Proof of Theorem \ref{GG:THRM2}}\label{APPGG:TheoremN3}

We here provide a formal proof of Theorem \ref{GG:THRM2}. Let us start with an observation that will be used in the upcoming proof.\\

\textbf{Observation}\label{GG:OBS}.-- On any pure, non-maximally entangled, two-qubit state
\begin{equation}\label{GG:qubit}
\ket{\phi_{\theta}} = \cos(\theta)\ket{00} + \sin(\theta)\ket{11}
\end{equation}
i.e. for $\theta \in ]0,\frac{\pi}{4}[$, the measurements
\begin{equation} \label{HardyMeas}
\begin{split}
M_{0|0} = \cos(\alpha)\bra{0} + e^{i \delta} \sin(\alpha)\bra{1} \\
M_{0|1} \propto \cos^2(\theta)\cos(\alpha)\bra{0} + e^{i \delta}\sin^2(\theta)\sin(\alpha)\bra{1} \\ \\
N_{0|0} \propto \sin^3(\theta)e^{i \delta}\sin(\alpha)\bra{0} - \cos^3(\theta)\cos(\alpha)\bra{1} \\
N_{0|1} \propto \sin(\theta)\sin(\alpha)e^{i \delta}\bra{0} -  \cos(\theta)\cos(\alpha)\bra{1}
\end{split}
\end{equation}
lead to correlations $P_{\theta}(ab|xy) =\bra{\phi_{\theta}} \big( M_{a|x} \otimes N_{b|y} \big) \ket{\phi_{\theta}}$ that violate inequality \eqref{IA1A2} $I^{A_1A_2}(P_{\theta}(ab|xy)) > 0$ with the free parameters $\alpha$ and $\delta$ such that $\alpha \neq 0,\pi/2$ for any $\theta \neq 0,\pi/4$. More precisely, they lead to the particular violation of the inequality \eqref{IA1A2}
\begin{equation} \label{GG:QubitsResult}
I^{A_1A_2}(P_{\theta}(ab|xy)) = P_{\theta}(00|00) > 0
\end{equation}
and thus $P_{\theta}(01|01) = P_{\theta}(10|10) = P_{\theta}(00|11) = 0$, i.e. a realisation of the bipartite Hardy paradox \cite{Hardy}.  A proof of this observations can be found further in the appendix \ref{GGAPP:HardyMeasurements}.

Since we are interested in a violation up to any extent of our inequality 
\begin{align}
I^{A_1A_2}(P_{\theta}(ab|xy)) > 0
\end{align}
whose bound is zero, we have taken the freedom not to normalise some of the measurements in \eqref{HardyMeas}. In other words, the observation \eqref{GG:OBS} implies that for any non-maximally entangled pure two qubit state $\ket{\phi_{\theta}}$ \eqref{GG:qubit}, one can chose one of the measurement of one of the parties for free (as expressed by the free parameters $\alpha$ and $\delta$ such that $\alpha \neq 0,\pi/2$) and still find three other measurements such that the generated correlations violated the inequality.\\

Now we want to show that a large class of three qubit GME states violate inequality$I^{A_1A_2A_3}_{\mu = 0}$ \eqref{IA1A2A3}. In \cite{Acin2000}, it was shown that all three qubits in a pure state could be written as
\begin{equation}\begin{split}\label{3qubits}
\ket{\Psi_3} = h_0 \ket{000} + h_1 e^{i \phi} \ket{100} + h_2 \ket{101} + h_3 \ket{110} + h_4 \ket{111}
\end{split}\end{equation}
where $h_i \geq 0$, $\sum\limits_i h_i^2 = 1$ and $\phi \in [0,\pi]$. On these states, we impose the additional constrain that $h_2 = h_3$, i.e. we consider only the states \eqref{3qubits} which are symmetrical with respect to the permutations of the parties $A_2 \leftrightarrow A_3$. By  relabelling the parties' index, however, any state which is symmetrical with respect to the permutation of two out of the three parties can be transformed to one where the symmetry is between parties $A_2$ and $A_3$, which we chose without loss of generality. Now, party $A_2$ and $A_3$ both make the same projective measurement $\bra{m_{a_i|x_i}}$ for their input choice $x_2 = x_3 = 0$
\begin{equation}\begin{split}\label{3qubitsMeasBC}
\bra{m_{0|x_i=0}} =  \cos (\alpha) \bra{0} + \sin (\alpha)  \bra{1}
\end{split}\end{equation}
for some (yet) free angle $\alpha$\footnote{Remark additionally that we do not make use of a potential second degree of freedom (the phase).}. The state that is prepared  between parties $A_1A_3$ (resp. $A_1A_2$) from party $A_2$ ($A_3$) by performing measurement $\bra{m_{0|x_i=0}}$ \eqref{3qubitsMeasBC} on the state $\ket{\Psi_3}$ \eqref{3qubits} conditioned on obtaining the outcome $a_2 = 0$ ($a_3 = 0$) is
\begin{equation}\begin{split}\label{postmeasuMeasBC}
\ket{\psi_{0|0}^{A_1A_2}} = \ket{\psi_{0|0}^{A_1A_3}} \propto \cos (\alpha) h_0 \ket{00} + \big( \cos (\alpha) h_1 + \sin (\alpha) h_2 \big) \ket{10} \\
+ \big( \cos (\alpha) h_2 + \sin (\alpha) h_4 \big) \ket{11}
\end{split}\end{equation}
since $h_2 = h_3$ and that both the state $\ket{\Psi_3}$ \eqref{3qubits} and measurements $\bra{m_{0|x_i}}$ are symmetrical with respect to permutation $A_2 \leftrightarrow A_3$. Using the concurrence, the state $\ket{\psi_{0|0}^{A_1A_2}}$ \eqref{postmeasuMeasBC} is entangled if and only if
\begin{equation}\begin{split}
\det \big(\begin{tabular}{ c c }  $ \cos (\alpha) h_0$ & $0$ \\  $ \cos (\alpha) h_1 + \sin (\alpha) h_2$ & $\cos (\alpha) h_2 + \sin (\alpha) h_4$  \end{tabular} \big) \neq 0 \\ \vspace{0.3cm}
\Leftrightarrow  \cos (\alpha) h_0 \big( \cos (\alpha) h_2 + \sin (\alpha) h_4 \big) \neq 0
\end{split}\end{equation}
leading to the four conditions
\begin{equation}\begin{split}\label{GG:ThrmCond1}
\alpha \neq \frac{\pi}{2}
\end{split}\end{equation}
\begin{equation}\begin{split}\label{GG:ThrmCond2}
\tan(\alpha) \neq -\frac{h_2}{h_4}
\end{split}\end{equation}
\begin{equation}\begin{split}\label{GG:ThrmCond3}
h_0 \neq 0
\end{split}\end{equation}
\begin{equation}\begin{split}\label{GG:ThrmCond4}
h_2 \neq 0 \neq h_4
\end{split}\end{equation}
First, remark that both conditions \eqref{GG:ThrmCond3} and \eqref{GG:ThrmCond4} only mean that the state $\ket{\Psi_3}$ \eqref{3qubits} needs to be GME (as well as symmetrical $h_2 = h_3$). Now, since the parameter $\alpha$ is free, we choose to avoid the two values $\alpha = \frac{\pi}{2}$ and $\alpha = -\arctan \big( \frac{h_2}{h_4} \big) \neq 0$. In the end, one can tune continuously the parameter $\alpha$ (up to the forbidden values \eqref{GG:ThrmCond1} and \eqref{GG:ThrmCond2}) so that the prepared states $\ket{\psi_{0|0}^{A_1A_2}} = \ket{\psi_{0|0}^{A_1A_3}}$ are not maximally entangled. One can then use observation \ref{GG:OBS}, as well as the symmetries $A_2 \leftrightarrow A_3$ that was imposed on both state and measurements, to obtain
\begin{equation}\begin{split}\label{GG:ThrmFINAL}
I^{A_1A_2A_3}_{\mu = 0} = I^{A_1A_2}_{0|0}+I^{A_1A_3}_{0|0}-P(000|000) = 2 I^{A_1A_2}_{0|0} - P(000|000)\\ = P(000|000)-2P(100|100)-2P(010|010)-2P(000|110) > 0
\end{split}\end{equation}
by choosing $A_1$'s measurements as in \eqref{HardyMeas} for the prepared (non maximally entangled) state $\ket{\psi_{0|0}^{A_1A_2}}$ \eqref{postmeasuMeasBC}, i.e. realising
\begin{equation}\begin{split} \label{GG:QubitsResult3part}
P(000|000) > 0\\
P(010|010) = 0\\
P(100|100) = 0\\
P(000|110) = 0
\end{split}\end{equation}



\section{Hardy's measurements for $n=2$}\label{GGAPP:HardyMeasurements}

From the realisation \eqref{GG:QubitsResult}, we have four conditions
\begin{equation}\begin{split} \label{GG:QubitsResult2part}
P(00|00) > 0\\
P(01|01) = 0\\
P(10|10) = 0\\
P(00|11) = 0
\end{split}\end{equation}
to be satisfied by the measurement $M_{a|x}$ and $N_{b|y}$ made on the state $\ket{\phi_{\theta}} = \cos(\theta)\ket{00} + \sin(\theta)\ket{11}$ written in it's Schmidt basis by A and B respectively. We start by choosing $M_{0|0} = \cos\alpha \bra{0} + \sin\alpha e^{i \delta} \bra{1}$ freely and then try to satisfy these four conditions. From $P(01|01) = 0$ we get that\\

\begin{align}
(\cos\alpha \bra{0} + \sin\alpha e^{i \delta} \bra{1}) \otimes N_{0|1} \cdot (\cos(\theta)\ket{00} + \sin(\theta)\ket{11}) = 0 \nonumber \\  
\Leftrightarrow N_{0|1} (\cos\alpha \cos\theta \ket{0} + e^{i \delta} \sin\alpha \sin\theta \ket{1}) = 0 \nonumber \\
\Leftrightarrow N_{0|1} \propto e^{i \delta}\sin\alpha \sin\theta \bra{0} - \cos\alpha \cos\theta \bra{1}
\end{align}\\

where we use non normalized measurements, which, again, does not make a difference when interested in conditions of the form $P(a_1a_2|x_1x_2)=0$ or $P(a_1a_2|x_1x_2)>0$. Considering projective two-outcome measurements: \\

\begin{align}
N_{1|1} \propto \cos\alpha \cos\theta \bra{0} + e^{-i \delta} \sin\alpha \sin\theta  \bra{1}
\end{align}

Then, with condition $P(00|11) = 0$

\begin{align}
M_{1|1} \otimes N_{1|1} (\cos(\theta)\ket{00} + \sin(\theta)\ket{11}) = 0 \nonumber \\
\Leftrightarrow M_{1|1} (\cos\alpha \cos^2\theta \ket{0} + e^{-i \delta} \sin\alpha \sin^2\theta \ket{1}) = 0 \nonumber \\
\Rightarrow M_{1|1} \propto e^{-i \delta} \sin\alpha \sin^2\theta \bra{0} - \cos\alpha \cos^2\theta \bra{1} \\
\Rightarrow M_{0|1}  \propto \cos\alpha \cos^2\theta \bra{0} + e^{i \delta} \sin\alpha \sin^2\theta \bra{1}
\end{align}

Finally, from condition $P(10|10) = 0$

\begin{align}
M_{0|1} \otimes N_{0|0} (\cos(\theta)\ket{00} + \sin(\theta)\ket{11}) = 0 \nonumber \\
\Rightarrow N_{0|0} \propto e^{i \delta} \sin\alpha \sin^3\theta \bra{0} - \cos\alpha \cos^3\theta \bra{1} \\
\Rightarrow N_{1|0} \propto \cos\alpha \cos^3\theta \bra{0} +  e^{-i \delta} \sin\alpha \sin^3\theta \bra{1}
\end{align}

Now one can check that with these measurements on the state $\cos(\theta)\ket{00} + \sin(\theta)\ket{11}$ gives:

\begin{align}
M_{0|0} \otimes N_{0|0} (\cos(\theta)\ket{00} + \sin(\theta)\ket{11}) \propto ... = -\frac{e^{i \delta}}{8} \sin 2\alpha \sin 4\theta
\end{align}

That is equal to zero -- i.e. $P(00|00) = 0$ -- if and only if $\alpha = 0, \pi/2$ or $\theta = 0,\pi/4$. In the end, the conditions \eqref{GG:QubitsResult2part} are satisfied for these measurements for all non-maximally entangled states with any set of measurements of the form
\begin{equation}
\begin{split}
M_{0|0} = \cos(\alpha)\bra{0} + e^{i \delta} \sin(\alpha)\bra{1} \\
M_{0|1} \propto \cos^2(\theta)\cos(\alpha)\bra{0} + e^{i \delta}\sin^2(\theta)\sin(\alpha)\bra{1} \\ \\
N_{0|0} \propto \sin^3(\theta)e^{i \delta}\sin(\alpha)\bra{0} - \cos^3(\theta)\cos(\alpha)\bra{1} \\
N_{0|1} \propto \sin(\theta)\sin(\alpha)e^{i \delta}\bra{0} -  \cos(\theta)\cos(\alpha)\bra{1}
\end{split}
\end{equation}
except for the forbidden values of $\alpha = 0, \pi/2$.

\section{Properties of our families of Bell inequalities}
\subsection{The family of Bell inequalities $I_{\textrm{sym}}^{A_1A_2...A_n}$ \eqref{IGeneralsym} witnesses genuine multipartite nonlocality}\label{APP2:IGeneral}
In this section we want to give a more detailed proof of Theorem \ref{NpartyisGMNL}, which states that 
for any number $n\geq3$ of observers, all biseparable distributions \eqref{biseparable} satisfy our family of inequalities \eqref{IGeneralsym},
\begin{equation}\begin{split}\label{app:IGeneral}
I_{\textrm{sym}}^{A_1A_2...A_n} =\sum\limits_{i=1}^{n-1}\sum\limits_{j>i}^{n} I^{A_iA_j}_{\vec{0}|\vec{0}} - {n-1\choose 2}  P(\vec{0}|\vec{0}) \leq 0
\end{split}\end{equation}
where ${n-1\choose 2}  = \frac{(n-1)(n-2)}{2}$ and thus $I^{A_1A_2...A_n}$ witnesses GMNL in the distributions. The proof for the family of inequalities $I^{A_1\ldots A_n}_{\textrm{\ding{192}}}$ \eqref{IGeneral} follows exactly the same lines.

\begin{proof} Our Bell inequalities $I^{A_1A_2...A_n}$  are invariant under permutations of the observers. Since a Bell inequality is a linear function of the probability terms $P(\vec{a}|\vec{x})$, and by the convexity of biseparable distributions \eqref{biseparable}, we can restrict the proof -- without loss of generality -- to pure biseparable distributions of the form 
\begin{equation}\begin{split}\label{Bisepn}
P_{m/(n-m)} \equiv P(a_1a_2\ldots a_m|x_1x_2\ldots x_m)P(a_{m+1}a_{m+2}\ldots a_n|x_{m+1}x_{m+2}\ldots x_n)\,,
\end{split}\end{equation}
where the first term includes the variables of the $m$ first observers and the second the remaining $n-m$. Let us recall that, inside each group, observers are allowed to share any no-signalling nonlocal resources. Our proof consists in counting how many lifted inequalities $I^{A_iA_j}_{\vec{0}|\vec{0}}$ \eqref{InLiftee} can be violated by a pure biseparable distribution \eqref{Bisepn}. We will see that this happens to \textit{at most} ${n-1\choose 2}$ lifted inequalities. Indeed, a term $I^{A_iA_j}_{\vec{0}|\vec{0}}$ can only be positive if observers $A_i$ and $A_j$ belong to the same group ($i,j \leq m$ or $i,j > m$), since otherwise there are only classically correlated (see Appendix~\ref{APP1:liftings}). Thus
\begin{equation}\begin{split}\label{ToProveN1}
I_{\textrm{sym}}^{A_1A_2...A_n}(P_{m/(n-m)}) \leq \sum\limits_{i=1}^{m-1}\sum\limits_{j>i}^{m} I^{A_iA_j}_{\vec{0}|\vec{0}} + \sum\limits_{k=m+1}^{n-1}\sum\limits_{l>k}^{n} I^{A_kA_l}_{\vec{0}|\vec{0}} - {n-1\choose 2} P(\vec{0}|\vec{0})\\
= \sum\limits_{i=1}^{m-1}\sum\limits_{j>i}^{m} I^{A_iA_j}_{\vec{0}|\vec{0}}- {m\choose 2}P(\vec{0}|\vec{0}) + \sum\limits_{k=m+1}^{n-1}\sum\limits_{l>k}^{n} I^{A_kA_l}_{\vec{0}|\vec{0}} - {n-m\choose 2}P(\vec{0}|\vec{0})\\ - (m-1)(n-m-1) P(\vec{0}|\vec{0})\\
=  \sum\limits_{i=1}^{m-1}\sum\limits_{j>i}^{m} \bar{I}^{A_iA_j}_{\vec{0}|\vec{0}} + \sum\limits_{k=m+1}^{n-1}\sum\limits_{l>k}^{n} \bar{I}^{A_kA_l}_{\vec{0}|\vec{0}} - (m-1)(n-m-1) P(\vec{0}|\vec{0})\\ \leq - (m-1)(n-m-1) P(\vec{0}|\vec{0}) \leq 0
\end{split}
\end{equation}
in which we have used the fact that  $\sum\limits_{i=1}^{m-1}\sum\limits_{j>i}^{m} I^{A_iA_j}_{\vec{0}|\vec{0}}$ contains ${m\choose 2}$ lifted terms and $ \sum\limits_{k=m+1}^{n-1}\sum\limits_{l>k}^{n} I^{A_kA_l}_{\vec{0}|\vec{0}}$ contains ${n-m\choose 2}$ of them. We have further used  $\bar{I}^{A_iA_j}_{\vec{0}|\vec{0}} \equiv I^{A_iA_j}_{\vec{0}|\vec{0}} - P(\vec{0}|\vec{0}) \leq 0$, for any $i,j$ (see equation \eqref{Ibar}). Notice that the situation where most lifted terms  $I^{A_iA_j}_{\vec{0}|\vec{0}}$ could be positive occurs for bipartitions of one versus $n-1$ observers, hence the ${n-1\choose 2}$ factor in our Bell inequalities \eqref{app:IGeneral}.
%
%
\end{proof}

\subsection{A recursive formula for our inequalities} \label{APP:Alternative}

Our family of Bell inequalities $I_{\textrm{sym}}^{A_1A_2...A_n} $ can also be written in a recursive form, which shows its rich multipartite structure and operational meaning:
\begin{equation}\begin{split}\label{app:IGeneral2}
I_{\textrm{sym}}^{A_1A_2...A_n} = \frac{1}{n-2}\sum\limits_{i = 1}^{n} I^{\textrm{all}\setminus A_i}_{0|0} - P(\vec{0}|\vec{0}) \leq 0
\end{split}\end{equation}
for $n\geq3$, where $I^{\textrm{all}\setminus A_i}_{0|0}$ is the Bell inequality testing genuine nonlocality between $n-1$ parties lifted to $n$ parties, with party $A_i$'s input and outcome set to $0$
\begin{equation}\label{IGeneral3}
I^{\textrm{all}\setminus A_i}_{0|0} = \frac{1}{n-3}\sum\limits_{\substack{j=1 \\ j \neq i}}^{n} I^{\textrm{all}\setminus A_iA_j}_{0|0} - P(\vec{0}|\vec{0})\,.
\end{equation}
The seed of this recursive expression is the variant of the CHSH inequality \eqref{IA1A2}.

\begin{proof}
We prove that the recursive expression \eqref{app:IGeneral2} is equivalent to the direct expression \eqref{app:IGeneral} for $I_{\textrm{sym}}^{A_1A_2...A_n}$ through mathematical induction. First, we check that for $n=3$ the equivalence holds, which can easily be done by developing both expressions. Then, we show that if the equivalence is true for $n$, it implies that it is true also for $n+1$.

Suppose the equivalence holds for $n$:
\begin{equation}\begin{split}\label{EquProof1}
 \frac{1}{n-2}\sum\limits_{i = 1}^{n} I^{\textrm{all}\setminus A_i}_{0|0} - P(\vec{0}|\vec{0}) = \sum\limits_{i=1}^{n-1}\sum\limits_{j>i}^{n} I^{A_iA_j}_{\vec{0}|\vec{0}} - {n-1\choose 2}  P(\vec{0}|\vec{0}) \,.
\end{split}\end{equation}

For $n+1$, we develop the recursive expression in \eqref{app:IGeneral2}, where $I^{\textrm{all}\setminus A_i}_{0|0}$ is now an $n$ observer inequality for which the recurrence hypothesis \eqref{EquProof1} can be used:
\begin{equation}\begin{split}\label{EquProof3}
\frac{1}{n-1}\sum\limits_{i = 1}^{n+1} I^{\textrm{all}\setminus A_i}_{0|0} - P(\vec{0}|\vec{0})\\ =^{\eqref{EquProof1}} \frac{1}{n-1}\sum\limits_{i = 1}^{n+1}\bigg( \sum\limits_{\substack{j=1 \\ j \neq i}}^{n}&\sum\limits_{\substack{k>j \\ k \neq  i}}^{n+1} I^{A_jA_k}_{\vec{0}|\vec{0}} - {n-1\choose 2}  P(\vec{0}|\vec{0}) \bigg)  - P(\vec{0}|\vec{0}) \\
= \frac{1}{n-1}\sum\limits_{i = 1}^{n+1}\bigg( \sum\limits_{\substack{j=1 \\ j \neq i}}^{n}\sum\limits_{\substack{k>j \\ k \neq i}}^{n+1} &I^{A_jA_k}_{\vec{0}|\vec{0}} \bigg)  -  \bigg(  \frac{n+1}{n-1} {n-1\choose 2} + 1 \bigg)P(\vec{0}|\vec{0})
\end{split}\end{equation}
Note that the last expression can be simplified taking into account that the terms $I^{A_jA_k}_{\vec{0}|\vec{0}}$ are being counted multiple times.  Since the inequalities are invariant under permutations of observers, we can restrict our attention to counting how many times the particular term $I^{A_1A_2}_{\vec{0}|\vec{0}}$ appears in \eqref{EquProof3}. 
One can check that $\sum\limits_{\substack{j=1 \\ j \neq i}}^{n}\sum\limits_{\substack{k>j \\ k \neq i}}^{n+1} I^{A_jA_k}_{\vec{0}|\vec{0}}$ gives one term $I^{A_1A_2}_{\vec{0}|\vec{0}}$ if $i \neq 1,2$. Suming over $i$, we get a total of $n-1$ terms, from which we obtain 
\begin{equation}\begin{split}\label{EquProof5}
\frac{1}{n-1}\sum\limits_{i = 1}^{n+1}\bigg( \sum\limits_{\substack{j=1 \\ j \neq i}}^{n}\sum\limits_{\substack{k>j \\ k \neq i}}^{n+1} I^{A_jA_k}_{\vec{0}|\vec{0}} \bigg)  -  \bigg(  \frac{n+1}{n-1} {n-1\choose 2} + 1 \bigg)P(\vec{0}|\vec{0}) = \sum\limits_{j=1}^{n}\sum\limits_{k>j}^{n+1} I^{A_jA_k}_{\vec{0}|\vec{0}}  - {n\choose 2} P(\vec{0}|\vec{0})
\end{split}\end{equation}
where we used $\frac{n+1}{n-1} {n-1\choose 2} + 1={n\choose 2}$. Since the last expression coincides with the direct expression \eqref{app:IGeneral} for $n+1$ observers, we finish our proof.
\end{proof}
%



\subsection{Fully local strategies that saturate the inequalities}

Interestingly, one can check that the (fully) local strategy
\begin{equation}\begin{split}\label{LocalStrategy}
P_{\textrm{L}}(a_1a_2...a_n|x_1x_2...x_n)
= \begin{cases} 1 \hspace{0.3cm} \textrm{if} \hspace{0.3cm} a_i = 1 \hspace{0.2cm} \forall i \hspace{0.1cm}\textrm{and} \hspace{0.1cm} \forall x_i\\
0 \hspace{0.3cm} \textrm{else} \end{cases}
\end{split}\end{equation}
saturates our families of inequalities \eqref{IGeneralsym} and \eqref{IGeneral} since there is no term in the inequalities where all outcomes have value $1$. Nonlocal resources shared between a subset of the observers are thus useless to reach better bounds on our family, only nonlocal resources shared between \textit{all} observers are relevant. Remark that these observation generalise to all the families of inequalities that have the CHSH inequality $I^{A_1A_2}$ \eqref{IA1A2} as seed.

\subsection{Post-quantum no-signalling resources that violate the inequalites}
Consider a genuine multipartite generalisation of the (no-signalling) PR-box \cite{NS}:
\begin{equation}\label{NSbox}
P_{\textrm{NS}}(\vec a|\vec x) = \begin{cases}
\frac{1}{2^{n-1}} \hspace{0.2cm} \textrm{if} \hspace{0.2cm} \oplus_{i = 1}^{n} a_i = \oplus_{i = 1}^{n-1}\oplus_{j>i}^{n} x_ix_j\\
0  \hspace{0.7cm} \textrm{else}
\end{cases}
\end{equation}
where the marginal distributions are completely random, i.e. $P_{\textrm{NS}}(a_i|x_i)=\frac{1}{2}$,$\forall i$. It is interesting to see that this post-quantum no-signalling distribution violates our Bell inequalities $I^{A_1A_2...A_n}$, for all $n\geq2$, 

\begin{equation}
I_{\textrm{sym}}^{A_1A_2...A_n}(P_{\textrm{NS}}) = \frac{n-1}{2^{n-1}} >0\,.
\end{equation}

\begin{proof} The proof follows from direct evaluation of our inequalities \eqref{app:IGeneral} with the no-signalling box \eqref{NSbox}. First, we get that $I_{\vec{0}|\vec{0}}^{A_iA_j}(P_{\textrm{NS}}) =^{\eqref{InLiftee}} P_{\textrm{NS}}(\vec0|\vec0) \hspace{0.3cm} \forall i,j$ because $P_{\textrm{NS}}(\vec0|\vec0)$  is the only non-vanishing term. Then 
\begin{equation}\begin{split}
I_{\textrm{sym}}^{A_1A_2...A_n}(P_{\textrm{NS}}) =  \sum\limits_{i = 1}^{n-1}\sum\limits_{j>i}^n P_{\textrm{NS}}(\vec0|\vec0) -{n-1\choose2} P_{\textrm{NS}}(\vec{0}|\vec{0})\\= \frac{1}{2^{n-1}}\Big[{n \choose 2}-{n-1\choose 2}\Big] = \frac{n-1}{2^{n-1}} > 0, \hspace{0.3cm} \forall n\geq2
\end{split}\end{equation}
which finishes our proof. \end{proof}

A similar proof can be made for the inequalities in the family $I^{A_1\ldots A_n}_{\textrm{\ding{192}}}$ \eqref{IGeneral}.

\section{All pure GME states of the family \textbf{$\ket{GHZ^n}_{\theta} = \cos\theta\ket{0}^{\otimes n} - \sin\theta\ket{1}^{\otimes n}$} generate GMNL correlations}\label{APP3:GME-GMNL}
Here we prove Theorem \ref{GG:THRM1} in detail.

\begin{proof}
Our proof is constructive as we will provide, for all states
\begin{equation}\begin{split}\label{GHZapp}
\ket{GHZ^n}_{\theta} = \cos\theta\ket{0}^{\otimes n} - \sin\theta\ket{1}^{\otimes n}
\end{split}\end{equation}
with $\theta \in ]0,\frac{\pi}{4}[$, local measurements that lead to explicit distributions $P_{GHZ^n_\theta}(\vec{a}|\vec{x})$ violating our family of inequalities $I_{\textrm{sym}}^{A_1...A_n}$\eqref{IGeneralsym}. 

In order to provide symmetry to the problem, and significantly reduce the degrees of freedom,  all the observers use the same projective measurements $m_{a_i|x_i}= m_{a|x}$:
\begin{equation}\begin{split}\label{Measurements}
m_{0|x} &= \cos\alpha_x\bra{0}+\sin\alpha_x\bra{1}\\
m_{1|x} &= \sin(\alpha_x\bra{0}-\cos\alpha_x\bra{1}\,.
\end{split}\end{equation}
Since both the state \eqref{GHZapp} and measurements \eqref{Measurements} are invariant under permutations of the observers, the generated distribution $P_{GHZ^n_\theta}(\vec{a}|\vec{x})$ also has this symmetry. For three observers for example, we get that $P(100|100) = P(010|010) = P(001|001)$ or $P(000|011) = P(000|101) = P(000|110)$ or that the lifted inequalities are all equal $I^{A_1A_2}_{\vec{0}|\vec{0}} = I^{A_1A_3}_{\vec{0}|\vec{0}} = I^{A_2A_3}_{\vec{0}|\vec{0}}$. This implies that inequalities $I^{A_1...A_n}$ \eqref{IGeneral}, when evaluated on the generated distributions, simplify to
\begin{equation}\begin{split}\label{Thrm1}
I_{\textrm{sym}}^{A_1A_2...A_n}\big(P_{GHZ_\theta^n}\big)= {n\choose 2} I^{A_1A_2}_{\vec{0}|\vec{0}} - {n-1\choose 2}P(\vec{0}|\vec{0})
\\= (n-1)P(00\vec{0}|00\vec{0}) - 2{n\choose 2}P(10\vec{0}|10\vec{0})-{n\choose 2}P(00\vec{0}|11\vec{0})
\end{split}\end{equation}
where we have used that ${n\choose 2}-{n-1\choose 2} = n-1$.
Using measurements \eqref{Measurements} on the state \eqref{GHZapp} we obtain all the terms of \eqref{Thrm1}
\begin{equation}\begin{split}\label{Thrm3}
P(00\vec{0}|00\vec{0}) &= \big(\cos^n (\alpha_0)\cos(\theta) - \sin^n(\alpha_0)\sin(\theta)\big)^2\\
P(10\vec{0}|10\vec{0}) &= \big(\cos^{n-1}(\alpha_0)\sin(\alpha_1)\cos(\theta) + \sin^{n-1} (\alpha_0)\cos(\alpha_1)\sin(\theta)\big)^2\\
P(00\vec{0}|11\vec{0})  &= \big(\cos^{n-2} (\alpha_0)\cos^2 (\alpha_1)\cos(\theta)- \sin^{n-2}(\alpha_0)\sin^2(\alpha_1)\sin(\theta)\big)^2
\end{split}\end{equation}

We want now to find angles $\alpha_x$ of the local measurements \eqref{Measurements} such that the quantity \eqref{Thrm1} is always positive. A particular solution is
\begin{equation}\begin{split}\label{Thrm2}\begin{cases}
P(00\vec{0}|00\vec{0}) > 0\\
P(10\vec{0}|10\vec{0}) = P(00\vec{0}|11\vec{0}) = 0\end{cases}\,.
\end{split}\end{equation}
which holds true for angles
\begin{equation}\begin{split}\label{angles}
\alpha_0 &= \textrm{arctan}(\tan^{\frac{-3}{3n-4}}\theta)\\
\alpha_1 &= -\textrm{arctan}(\tan^{\frac{-1}{3n-4}}\theta)
\end{split}\end{equation}
when $\theta \in ]0,\frac{\pi}{4}[$. The value of the inequalities at these angles is
\begin{equation}\begin{split}\label{IGHZ}
I_{\textrm{sym}}^{A_1A_2...A_n}(P_{GHZ}(\vec{a}|\vec{x})) = (n-1)P(00\vec{0}|00\vec{0})\\ =(n-1)\big(\cos^n (\textrm{arctan}(\tan^{\frac{-3}{3n-4}}\theta)\big)\cos(\theta) - \sin^n\big(\textrm{arctan}(\tan^{\frac{-3}{3n-4}}\theta))\sin(\theta)\big)^2
\end{split}\end{equation}
which is positive for $\theta\in]0,\pi/4[$, as promised. 
\end{proof}
For the maximally entangled state ($\theta=\pi/4$) we have $P(\vec{0}|\vec{0}) = 0$, which means that our construction breaks. However, this state is already known to be genuine multipartite nonlocal for all number of observers \cite{BancalQuantifying}, and moreover we numerically found several sets of measurements on it that lead to a violation of our inequalities. 

\end{document}